\newtheorem{definition}{Definition}
\newtheorem{proposition}[definition]{Proposition}
\newtheorem{lemma}[definition]{Lemma}
\newtheorem{theorem}[definition]{Theorem}
\newtheorem{corollary}[definition]{Corollary}
\def\squareforqed{\hbox{\rlap{$\sqcap$}$\sqcup$}}
\def\qed{\ifmmode\squareforqed\else{\unskip\nobreak\hfil
\penalty50\hskip1em\null\nobreak\hfil\squareforqed
\parfillskip=0pt\finalhyphendemerits=0\endgraf}\fi}
\def\endenv{\ifmmode\;\else{\unskip\nobreak\hfil
\penalty50\hskip1em\null\nobreak\hfil\;
\parfillskip=0pt\finalhyphendemerits=0\endgraf}\fi}
\newenvironment{proof}{\noindent \textbf{{Proof~} }}{\hfill $\blacksquare$}
\newcounter{remark}
\newenvironment{remark}[1][]{\refstepcounter{remark}\par\medskip\noindent%
\textbf{Remark~\theremark #1} }{\medskip}
\newcounter{example}
\mathchardef\ordinarycolon\mathcode`\:
\def\vcentcolon{\mathrel{\mathop\ordinarycolon}}
\newmdenv[skipabove=7pt,
skipbelow=7pt,
backgroundcolor=darkblue!15,
innerleftmargin=5pt,
innerrightmargin=5pt,
innertopmargin=5pt,
leftmargin=0cm,
rightmargin=0cm,
innerbottommargin=5pt,
linewidth=1pt]{tBox}
\newmdenv[skipabove=7pt,
skipbelow=7pt,
backgroundcolor=blue2!25,
innerleftmargin=5pt,
innerrightmargin=5pt,
innertopmargin=5pt,
leftmargin=0cm,
rightmargin=0cm,
innerbottommargin=5pt,
linewidth=1pt]{dBox}
\newmdenv[skipabove=7pt,
skipbelow=7pt,
backgroundcolor=darkkblue!15,
innerleftmargin=5pt,
innerrightmargin=5pt,
innertopmargin=5pt,
leftmargin=0cm,
rightmargin=0cm,
innerbottommargin=5pt,
linewidth=1pt]{sBox}
\definecolor{darkblue}{RGB}{0,80,156}
\definecolor{darkkblue}{RGB}{0,0,153}
\definecolor{blue2}{RGB}{102,178,255}
\definecolor{darkred}{RGB}{195,0,0}
\newcommand{\nc}{\newcommand}
\nc{\rnc}{\renewcommand}
\nc{\beg}{\begin{equation}}
\nc{\eeq}{{\end{equation}}}
\nc{\beqa}{\begin{eqnarray}}
\nc{\eeqa}{\end{eqnarray}}
\nc{\lbar}[1]{\overline{#1}}
\nc{\bra}[1]{\langle#1|}
\nc{\ket}[1]{|#1\rangle}
\nc{\ketbra}[2]{|#1\rangle\!\langle#2|}
\nc{\braket}[2]{\langle#1|#2\rangle}
\nc{\proj}[1]{| #1\rangle\!\langle #1 |}
\nc{\avg}[1]{\langle#1\rangle}
\nc{\rank}{\operatorname{Rank}}
\nc{\smfrac}[2]{\mbox{$\frac{#1}{#2}$}}
\nc{\tr}{\operatorname{Tr}}
\nc{\ox}{\otimes}
\nc{\dg}{\dagger}
\nc{\dn}{\downarrow}
\nc{\cA}{{\cal A}}
\nc{\cB}{{\cal B}}
\nc{\cC}{{\cal C}}
\nc{\cD}{{\cal D}}
\nc{\cE}{{\cal E}}
\nc{\cF}{{\cal F}}
\nc{\cG}{{\cal G}}
\nc{\cH}{{\cal H}}
\nc{\cI}{{\cal I}}
\nc{\cJ}{{\cal J}}
\nc{\cK}{{\cal K}}
\nc{\cL}{{\cal L}}
\nc{\cM}{{\cal M}}
\nc{\cN}{{\cal N}}
\nc{\cO}{{\cal O}}
\nc{\cP}{{\cal P}}
\nc{\cQ}{{\cal Q}}
\nc{\cR}{{\cal R}}
\nc{\cS}{{\cal S}}
\nc{\cT}{{\cal T}}
\nc{\cV}{{\cal V}}
\nc{\cX}{{\cal X}}
\nc{\cY}{{\cal Y}}
\nc{\cZ}{{\cal Z}}
\nc{\cW}{{\cal W}}
\nc{\csupp}{{\operatorname{csupp}}}
\nc{\qsupp}{{\operatorname{qsupp}}}
\nc{\var}{{\operatorname{var}}}
\nc{\rar}{\rightarrow}
\nc{\lrar}{\longrightarrow}
\nc{\polylog}{{\operatorname{polylog}}}
\nc{\wt}{{\operatorname{wt}}}
\nc{\av}[1]{{\left\langle {#1} \right\rangle}}
\nc{\supp}{{\operatorname{supp}}}
\nc{\argmin}{{\operatorname{argmin}}}
\def\x{\xi}
\nc{\RR}{{{\mathbb R}}}
\nc{\CC}{{{\mathbb C}}}
\nc{\FF}{{{\mathbb F}}}
\nc{\NN}{{{\mathbb N}}}
\nc{\ZZ}{{{\mathbb Z}}}
\nc{\PP}{{{\mathbb P}}}
\nc{\QQ}{{{\mathbb Q}}}
\nc{\UU}{{{\mathbb U}}}
\nc{\EE}{{{\mathbb E}}}
\nc{\id}{{\operatorname{id}}}
\nc{\CHSH}{{\operatorname{CHSH}}}
\nc{\be}{\begin{equation}}
\nc{\ee}{{\end{equation}}}
\nc{\bea}{\begin{eqnarray}}
\nc{\eea}{\end{eqnarray}}
\nc{\rU}{\mbox{U}}
\nc{\ob}[1]{#1}
\nc{\SEP}{{\text{\rm SEP}}}
\nc{\NS}{{\text{\rm NS}}}
\nc{\LOCC}{{\text{\rm LOCC}}}
\nc{\PPT}{{\text{\rm PPT}}}
\nc{\EXT}{{\text{\rm EXT}}}
\nc{\Sym}{{\operatorname{Sym}}}
\nc{\ERLO}{{E_{\text{r,LO}}}}
\nc{\ERLOCC}{{E_{\text{r,LOCC}}}}
\nc{\ERPPT}{{E_{\text{r,PPT}}}}
\nc{\ERLOCCinfty}{{E^{\infty}_{\text{r,LOCC}}}}
\nc{\Aram}{{\operatorname{\sf A}}}
\def\grd@save@target#1{%
  \def\grd@target{#1}}
\def\grd@save@start#1{%
  \def\grd@start{#1}}
\tikzset{
  grid with coordinates/.style={
    to path={%
      \pgfextra{%
        \edef\grd@@target{(\tikztotarget)}%
        \tikz@scan@one@point\grd@save@target\grd@@target\relax
        \edef\grd@@start{(\tikztostart)}%
        \tikz@scan@one@point\grd@save@start\grd@@start\relax
        \draw[minor help lines,magenta] (\tikztostart) grid (\tikztotarget);
        \draw[major help lines] (\tikztostart) grid (\tikztotarget);
        \grd@start
        \pgfmathsetmacro{\grd@xa}{\the\pgf@x/1cm}
        \pgfmathsetmacro{\grd@ya}{\the\pgf@y/1cm}
        \grd@target
        \pgfmathsetmacro{\grd@xb}{\the\pgf@x/1cm}
        \pgfmathsetmacro{\grd@yb}{\the\pgf@y/1cm}
        \pgfmathsetmacro{\grd@xc}{\grd@xa + \pgfkeysvalueof{/tikz/grid with coordinates/major step}}
        \pgfmathsetmacro{\grd@yc}{\grd@ya + \pgfkeysvalueof{/tikz/grid with coordinates/major step}}
        \foreach \x in {\grd@xa,\grd@xc,...,\grd@xb}
        \node[anchor=north] at (\x,\grd@ya) {\pgfmathprintnumber{\x}};
        \foreach \y in {\grd@ya,\grd@yc,...,\grd@yb}
        \node[anchor=east] at (\grd@xa,\y) {\pgfmathprintnumber{\y}};
      }
    }
  },
  minor help lines/.style={
    help lines,
    step=\pgfkeysvalueof{/tikz/grid with coordinates/minor step}
  },
  major help lines/.style={
    help lines,
    line width=\pgfkeysvalueof{/tikz/grid with coordinates/major line width},
    step=\pgfkeysvalueof{/tikz/grid with coordinates/major step}
  },
  grid with coordinates/.cd,
  minor step/.initial=.2,
  major step/.initial=1,
  major line width/.initial=2pt,
}
\def\problem@s{}
\newcounter{problems@cnt}
\newcommand{\allproblems}{\problem@s}
\newcommand{\trace}{\operatorname{Tr}}
\DeclareMathOperator*{\argmax}{arg\,max}
\newcommand{\order}{\operatorname{O}}
\newcommand{\Reg}{\operatorname{Reg}}
\title{More Practical and Adaptive Algorithms for Online Quantum State Learning}
\author{%
 Yifang Chen \\
 Institute for Quantum Computing\\
 Baidu Research, Beijing 100193, China \\
 \texttt{yifang@usc.edu} \\
 \And
 Xin Wang \\
 Institute for Quantum Computing\\
 Baidu Research, Beijing 100193, China \\
 \texttt{wangxin73@baidu.com} \\
}
\begin{document}

\maketitle

\begin{abstract}
Online quantum state learning is a recently proposed problem by Aaronson et al. (2018), where the learner sequentially predicts $n$-qubit quantum states based on given measurements on states and noisy outcomes. 
In the previous work, the algorithms are worst-case optimal in general, but fail in achieving tighter bounds in certain simpler or more practical cases. In this paper, we develop algorithms to advance the online learning of quantum states.
First, we show that Regularized Follow-the-Leader (RFTL) method with Tallis-2 entropy can achieve an $\order(\sqrt{MT})$ total loss with perfect hindsight on the first $T$ measurements with maximum rank $M$. This regret bound depends only on the maximum rank $M$ of measurements rather than the number of qubits, which takes advantage of low-rank measurements. 
Second, we propose a parameter-free algorithm based on a classical adjusting learning rate schedule that can achieve a regret depending on the loss of best states in hindsight, which takes advantage of low noisy outcomes. 
Besides these more adaptive bounds, we also show that our RFTL with Tallis-2 entropy algorithm can be implemented efficiently on near-term quantum computing devices, which is not achievable in previous works. 
\end{abstract}

\section{Introduction}
Major academic and industry efforts are currently in progress to realize scalable quantum hardware and develop powerful quantum software. Quantum computers are expected to have significant applications in solving intractable problems in areas including optimization, chemistry, security, and machine learning. On the other hand, machine learning theories may also help us better solve the problems in quantum computation~\cite{Biamonte2017c,Dunjko2018,Ciliberto2018}.  
One of the most fundamental topics in quantum computation and quantum information is to learn an unknown quantum state. Sufficient knowledge of a quantum state is in particular vital and indispensable in verification and identification of experimental outcomes on near-term quantum devices.

Suppose we have a quantum device or a physical process that could produce a quantum state. By using the device repeatedly, we can prepare many copies of the state and can then measure each copy. The goal of quantum state learning is to learn an approximate description of the state based on various measurement outcomes. To obtain a full characterization of an unknown quantum state, the most well-known approach is to do quantum state tomography~\cite{Nielsen2010}, which is of great practical and theoretical importance. To be specific, the goal of state tomography is to reconstruct the full density matrix that approximates the target unknown state $\rho$ within $\varepsilon$ in trace distance. In general, a $n$-qubit quantum state is described by roughly $2^{2n}$ real parameters, and a full tomography of these parameters is quite costly. Fully reconstruction of an unknown state in the worst-case costs exponential copies of the state~\cite{ODonnell2016,Haah2017b}.
The cost for tomography of an arbitrary $50$-qubit state is already prohibitively expensive.
 
However, some side information with respect to measurements is sufficient in some applications.
Therefore, \cite{Aaronson2007} considered how well a quantum state $\rho$ can be learned from the results of measurements. The goal of this setting is to construct a quantum state $\sigma$ that has roughly the same expectation value as the target state $\rho$ for a collection of two-outcome measurements. Recently, \cite{Aaronson2018b} considered the problem of shadow tomography, where the learner aims to output  $\tr(E_j\rho)$ up to an additive error. Under this setting, \cite{Aaronson2018b} showed that the required number of copies of the unknown state is nearly linear to the number of qubits and poly-logarithmic in terms of the number of the measurements.

One drawback of the above learning methods is the assumption that the sample measurements are drawn independently from some probability distribution, which is unrealistic in practice and fails to work for adversarial environments. Such an assumption does not always hold since the environment in experiments may change over time. It is thus desirable to design quantum learning algorithms that work in a more general model with sequential or even adversarial data.

The online learning theory~\cite{Shalev-Shwartz2011a,Hazan2016} represents a powerful tool to handle adversarial sequential data, which has a profound impact on machine learning. 
Inspired by the success of online learning, developing their quantum counterparts is a natural and essential topic in the emerging field of quantum machine learning.
Recently, \cite{Aaronson2018a} proposed the model of online quantum states learning to overcome the above drawback of the i.i.d. assumption and deal with sequential data. In the online learning model, the data of measurements and losses are provided sequentially, and the referee is able to choose the measurement operator even adversarially. The leaner predicts a series of quantum states interactively in the learning process. The goal is to minimize the regret, which is the difference in the total loss between the learning algorithm and the best hypothesis quantum state in hindsight. 
 
In this paper, we focus on the problem of online quantum state learning and consider more practical and adaptive cases. First, existing online learning algorithms are entirely classical, so we here consider a quantum algorithm that can be efficiently implemented on near-term quantum devices. Besides, most existing algorithms are proven to be worst-case optimal, but they would not perform well when dealing with a practical problem that is probably not the worst case or even relatively easy. Here we consider several common quantum computing assumptions:
\begin{itemize}
    \item First, the existing algorithms assume that the rank of the measurement operator is arbitrary and unknown, while in practice, the rank might be low and known in advance. 
    
    \item Second, the existing algorithms fail in adapting to the case when the loss of the best quantum state in hindsight is small. In quantum, this happens when each round we are allowed to measure a large batch of states copies each gives a binary outcome, and therefore the average we observed will close to the true "Yes" probability. This includes the special  "realizable"  case where the feedback is always the real "Yes" probability, which means the loss of best quantum states in hindsight is always zero.
\end{itemize}

In this work, we propose several algorithms to address those problems.
\begin{itemize}
    \item We propose the Regulerized Follow-the-Leader (RFTL) algorithm with Tallis-2 Entropy that achieves the regret in term of measurement rank $M$ as $\operatorname{O} (\sqrt{MT})$. Moreover, we propose a variational quantum algorithm that can achieve the same result on the near-term quantum device. (Section~\ref{sec: tsallis2})
    \item When the loss is $L_2$-norm, we provide a parameter-free learning rate adjusting strategy, which helps both RFTL with von Neumann entropy and Matrix Exponentiated Gradient (MEG) updating methods to achieve a so-called "small loss" bound $\Tilde{\operatorname{O}}\left(\sqrt{n L^*} + n \right)$, where $L^*$ is the loss of the best quantum state in hindsight and is formally defined in Eq.~\eqref{eq: def of small loss}. Notice that \cite{Aaronson2018a} provides a worst-case optimal algorithm based on RFTL with von-Neumann entropy, but whether we can obtain a "small loss" bound using the same algorithm remains open. In addition, \cite{Tsuda2005} shows that MEG can achieve such a "small loss" bound only when $L^*$ is known in advance, and the adversary is oblivious. Therefore, our result is strictly better than these two. (Section~\ref{sec: small_loss})
\end{itemize}

\section{Preliminaries and Problem Description}
\label{sec:setup}
\paragraph{Quantum states}
In quantum computing, a quantum bit (or qubit) could be in a superposition of $0$ and $1$, which is different with the classical bit in a deterministic state of either $0$ or $1$. Generally, any quantum data can be represented by quantum states over some space $\cX$ (e.g., $\cX = \CC^d$).
The pure states and mixed states can be unitedly described by a mathematical tool called density matrix (or density operator), which is a positive semidefinite matrix (i.e., $\rho \ge 0$) with trace one (i.e., $\tr\rho =1$). The set of $d$-dimensional quantum states is denoted by $\cS_d$. A quantum state is pure if $\rank(\rho)=1$; otherwise it is a mixed state. Particularly, a pure state can be represented by a unit vector in the sense that $\rho = \proj{\psi}$ with $\ket \psi \in \CC^d$ and  $\bra\psi = \ket \psi^{\dagger}$, where $\dagger$ refers to conjugate transpose.
Quantum states on a composed system of $\cX$ and $\cY$ can be described by density operators over the tensor (or Kronecker) product space $\cX\otimes \cY$. 
For a bipartite state $\rho_{XY}$, its marginal state $\rho_X$ on system $\cX$  can characterized via the partial trace function $\tr_{Y}(\cdot)$, i.e., $\rho_X = \tr_{Y}(\rho_{XY})$.

\paragraph{Two-outcome measurements}
Two-outcome measurement Quantum measurement is
a means to extract classical (observable) information from
a quantum state. When applying a two-outcome measurement to a quantum state, it succeeds with a specific probability. Mathematically, a two-outcome measurement could be described
by a Hermitian matrix $E$ with eigenvalues in $[0,1]$. When applying $E$ to a quantum state $\rho$, it outputs a successful result ``Yes'' with probability $\tr E\rho$ and a rejected
result ``No'' with probability $1-\tr E\rho$.

\paragraph{Frobenius norm and its dual}
Given two complex vector spaces $\mathcal{X}=\mathbb{C}^{d}$ and $\mathcal{Y}=\mathbb{C}^{m}$,
we denote the space of all linear operators of the form $A:\cX\to\cY$ as $L(\cX, \cY)$. The notation $L(\cX)$ is shorthand for $L(\cX, \cX)$. An inner product on $L(\cX, \cY)$ is defined as $\langle A, B\rangle=\operatorname{Tr}\left(A^{\dagger} B\right)$, where $A^\dagger$ is the conjugate transpose of $A$. 

For $\mathcal{X}=\mathbb{C}^{d},\mathcal{Y}=\mathbb{C}^{m}$, and any operator $A\in L(\cX,\cY)$, the Frobenius norm is defined as 
\begin{align}
\|A\|_2 = \sqrt{\tr(A^{\dagger}A)}.
\end{align}
This norm corresponds precisely to the $2$-norm of the vector of singular values of $A$. In particular, for a Hermitian operator $A$ ($A = A^{\dagger}$), its the Frobenius norm simplifies to 
$\|A\|_2 = \sqrt {\tr A^2}$. Moreover, the dual norm of the Frobenius norm is itself, i.e.,
$\|A\|_2^* = \|A\|_2$.

\paragraph{Tsallis entropy} For a quantum state $\rho$, the quantum Tsallis entropy is defined as follows
\begin{align}
S_{q}(\rho):=\frac{1}{1-q}\left(\operatorname{Tr}\left(\rho^{q}\right)-1\right), q \in(0,1) \cup(1, \infty).
\end{align}
In particular, the Tsallis-$2$ entropy is closely connected to the Frobenius norm in the sense that 
\begin{align}
S_{2}(\rho) = 1 - \tr \rho^2 = 1 - \|\rho\|_2^2.
\end{align}



\paragraph{Online learning and problem set-up} 
Online quantum state learning is a sequential prediction process with interactions between a player and a referee or adversary over $T$ rounds. Suppose there is an underlying unknown state $\rho$ to be learned. At the round $t$, the player predicts
a quantum state $\omega_t$. At the same time, the referee or adversary adaptively
chooses and reveals a two-outcome measurement $E_t$ and an outcome feedback $b_t$, based on the learner's strategy and the past history. Here $b_t$ could be the exact value or an approximate value of $\ell_t(\tr E_t \rho)$ with random noise, but is allowed to be arbitrary in general. At the end of this round, the player
suffers a loss 
\begin{align}
\ell_t(\tr E_t\omega_t),
\end{align}
where $\ell_t$ is a loss function. As a convention in online learning, we assume the loss function $\ell_{t}$ is convex and $L$-Lipschitz. Important examples of loss functions are $L_1$ loss defined as $\ell_{t}(z):=\left|z-b_{t}\right|$, and $L_2$ loss defined as $\ell_{t}(z):=(z-b_{t})^2$. 
 
The regret is the difference in the total loss between the
learning algorithm and the best hypothesis state in hindsight. Notice that this best hypothesis state in hindsight might be close but different from $\rho$ when feedback $b_t$ is noisy.
Formally this is captured by the notion of regret $Reg_T$, defined as 
\begin{align}
\Reg_T := \sum_{t=1}^T \ell_t(\tr E_t\omega_t)- \min_{\omega\in\cS_n}\sum_{t=1}^T \ell_t(\tr E_t\omega).
\end{align} 
 In this paper, we also consider the setting in which there are constraints on the measurement operators. For example, in the limited-rank setting,  the sequence of two-outcome measurements $E_t$ can be arbitrary but with limited rank $M$, i.e., $\rank(E_t)\le M$, $\forall t$. The limited-rank measurements are extensively considered and studied in the literature~\cite{Bruss2001,Jozsa2003,Eldar2002}.
 
In addition, we also consider the regret in terms of the loss of best hypothesis state in hindsight $L^*_T$ defined as 
\begin{align}
    L^*_T = \min_{\omega\in\cS_n}\sum_{t=1}^T \ell_t(\tr E_t\omega), \label{eq: def of small loss}
\end{align}
instead of the regret in terms of $T$.
In online learning, we usually call such regret form "small loss" bound. Some previous works about this small loss include~\cite{de_rooij_follow_2013} and ~\cite{Tsuda2005}.

\section{Learning with Tsallis-2 regularizer}
\label{sec: tsallis2}
\subsection{RFTL algorithm with Tsallis-2 regularizer and results}
We first follow the template of the Regularized Follow-the-Leader (RFTL) algorithm (see, e.g.,\cite{Hazan2016,Shalev-Shwartz2007,Abernethy2009}). In particular, we choose the Tsallis-2 entropy as the regularization term (or regularizer), 
while previous approach~\cite{Aaronson2018a} utilizes the von Neumann entropy.

\begin{algorithm}[H] 
\caption{RFTL algorithm with Tsallis-2 regularizer}
\begin{algorithmic}[1] \label{alg:rftl}
\STATE Input: number of rounds $T$, $\eta<\frac{1}{2}$, dimension $d$.
\STATE Set $\omega_1 = I/d$
\FOR{$t = 1, \ldots, T$}
\STATE Predict $\omega_{t}$. Consider the loss function $\ell_t$ and compute $\nabla_t = \ell_t'(\tr E_t\omega_t)E_t$.
\STATE Update $\omega_{t+1}$ via solving the following optimization
\begin{align}\label{eq: Tsallis-2 update rule}
\omega_{t+1}=\underset{\omega \in \cS_d}{\arg \min }\left\{\eta \sum_{s=1}^{t} \tr\left(\nabla_{s} \omega\right)-S_2(\omega)\right\}
\end{align}
\ENDFOR
\end{algorithmic}
\end{algorithm}

\begin{proposition}
\label{prop: sdp}
The updated state $\omega_t$ in Eq.~\eqref{eq: Tsallis-2 update rule} can be solved via seimidefinite programming.
\end{proposition}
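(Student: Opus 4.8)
The plan is to show that the update rule in Eq.~\eqref{eq: Tsallis-2 update rule} is a convex optimization problem over the spectrahedron $\cS_d$ whose objective and constraints admit a semidefinite programming representation. First I would substitute the explicit form of the Tsallis-2 entropy, $S_2(\omega) = 1 - \tr\omega^2 = 1 - \|\omega\|_2^2$, into the objective. Dropping the additive constant $1$, which does not affect the $\arg\min$, the optimization becomes
\begin{align}
\omega_{t+1} = \underset{\omega\in\cS_d}{\arg\min}\left\{\eta\sum_{s=1}^t \tr(\nabla_s\omega) + \tr\omega^2\right\}.
\end{align}
The linear term $\eta\,\tr\!\big((\sum_s\nabla_s)\omega\big)$ is linear in $\omega$, and the feasible set $\cS_d = \{\omega : \omega\succeq 0,\ \tr\omega = 1\}$ is exactly a spectrahedron (a positive-semidefinite cone intersected with an affine trace constraint), so these pieces are already in standard SDP form. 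The only genuinely nonlinear piece is the quadratic $\tr\omega^2 = \|\omega\|_2^2$.

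The key step is to linearize the quadratic term by introducing an auxiliary scalar variable $\tau$ together with a Schur-complement constraint. Concretely, I would reformulate the problem as
\begin{align}
\min_{\omega,\,\tau}\ \left\{\eta\sum_{s=1}^t\tr(\nabla_s\omega) + \tau\right\}\quad\text{subject to}\quad \omega\succeq 0,\ \tr\omega=1,\ \tr\omega^2 \le \tau,
\end{align}
which is equivalent to the original since at the optimum the constraint $\tr\omega^2\le\tau$ is tight. The remaining task is to express the epigraph constraint $\tr\omega^2\le\tau$ as a linear matrix inequality. Since $\tr\omega^2 = \|\omega\|_2^2$ is a squared Euclidean (Frobenius) norm of the vectorization of $\omega$, I would invoke the standard Schur complement characterization: writing $v = \mathrm{vec}(\omega)$, the condition $\|v\|_2^2 \le \tau$ is equivalent to the LMI
\begin{align}
\begin{pmatrix} \tau & v^{\dagger} \\ v & I \end{pmatrix} \succeq 0,
\end{align}
whose entries depend affinely on $\omega$ and $\tau$. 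This exhibits the whole problem as minimizing a linear objective subject to an affine PSD cone constraint, an affine equality, and one LMI—i.e., a semidefinite program.

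I expect the main obstacle to be cleanly handling the Hermitian-versus-real-symmetric bookkeeping: $\omega$ ranges over complex Hermitian matrices, so the vectorization and the Schur-complement LMI must be set up over the real representation of the Hermitian cone (or, equivalently, one notes that every complex SDP of size $d$ embeds into a real SDP of size $2d$). A secondary point worth stating explicitly is convexity—$\tr\omega^2$ is a convex function of $\omega$, so the feasible region carved out by $\tr\omega^2\le\tau$ is convex and the Schur-complement equivalence is valid—which guarantees the SDP is a faithful reformulation rather than merely a relaxation. Once these points are addressed, combining the affine objective, the trace constraint, the PSD constraint, and the single LMI yields the claimed semidefinite program, completing the proof.
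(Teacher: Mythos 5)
Your reformulation is correct, and it reaches the same conclusion by a slightly different Schur-complement construction than the paper. The paper keeps the epigraph at the matrix level: it introduces a Hermitian auxiliary variable $Q$ with the constraint $Q \succeq \omega I^{-1}\omega = \omega^2$, replaces $\tr\omega^2$ by $\tr Q$ in the objective, and encodes the constraint as the $2d\times 2d$ block LMI
\begin{align}
\begin{pmatrix} Q & \omega \\ \omega & I \end{pmatrix} \succeq 0, \qquad \omega \succeq 0,\ \tr\omega = 1,
\end{align}
so that minimizing $\tr Q$ forces $\tr Q = \tr\omega^2$ at the optimum. You instead use a scalar epigraph variable $\tau \ge \tr\omega^2 = \|\mathrm{vec}(\omega)\|_2^2$ and a Schur complement on the vectorization, which is equally valid (it is essentially a rotated second-order-cone constraint written as an LMI) but produces a $(d^2+1)\times(d^2+1)$ LMI rather than the paper's more compact $2d\times 2d$ block, and it obliges you to do the vectorization and complex-to-real bookkeeping you flag at the end; the paper's matrix-epigraph version sidesteps that by staying with Hermitian blocks throughout. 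Both arguments hinge on the same two facts — the constant $1$ in $S_2$ is irrelevant to the $\arg\min$, and the quadratic $\tr\omega^2$ admits an exact Schur-complement epigraph representation — so your proof is a faithful, if slightly bulkier, alternative to the one in the paper.
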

Semidefinite programs~\cite{Vandenberghe1996} can be solved efficiently by polynomial-time algorithms~\cite{Khachiyan1980,AHK05,AK07,AHK12} and have many applications in combinatorial optimization  and quantum information (e.g., \cite{Jain2011a,Watrous2009,Wang2018}). Therefore the updated state $\omega_t$ in Eq.~\eqref{eq: Tsallis-2 update rule} can be computed efficiently in polynomial time in the dimension of the state. The CVX software~\cite{Grant2008} allows us to compute SDPs in practice. 

We remark that Algorithm 1 is more practical on near-term quantum computers than previous online learning methods~\cite{Aaronson2018a,Yang2020} in the sense that the predicted state $\omega_t$ in Eq.~\eqref{eq: Tsallis-2 update rule} can be prepared using parameterized quantum circuits. We discuss the details in Section~\ref{sec:near term}.

Furthermore, we show that Algorithm \ref{alg:rftl} achieves an $\order(\sqrt{T})$ regret bound, which is stated formally in Theorem \ref{thm:regret bound}. A detailed proof is provided in Appendix \ref{app: tallis2}.
\begin{theorem}\label{thm:regret bound}
Suppose that the maximum Frobenius norm of $\{E_t\}_{t=1}^T$ is $\lambda$, i.e., $\lambda= \max_t \|E_t\|_2$. By setting $\eta=\frac{1}{L\lambda\sqrt T}$ the regret of Algorithm $1$ is bounded by 
\begin{align}\label{:eq:formal bound}
\Reg_T\le 2L\lambda\sqrt{T}.
\end{align}
where $T$ is the number of rounds and $L$ is the Lipschitz constant of the loss function. 
\end{theorem}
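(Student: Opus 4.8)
The plan is to follow the standard RFTL regret analysis of \cite{Hazan2016}, specialized to the Tsallis-2 regularizer $R(\omega):=-S_2(\omega)=\|\omega\|_2^2-1$, whose quadratic structure is exactly what makes the constants clean. First I would linearize the losses. Since each $\ell_t$ is convex and $\nabla_t=\ell_t'(\tr E_t\omega_t)E_t$ is precisely the gradient of $\omega\mapsto\ell_t(\tr E_t\omega)$ at $\omega_t$ (using that $E_t$ is Hermitian, so $\nabla_\omega\tr(E_t\omega)=E_t$), convexity gives
\[
\Reg_T\le\sum_{t=1}^T\langle\nabla_t,\omega_t-\omega^*\rangle,\qquad \omega^*:=\argmin_{\omega\in\cS_d}\sum_{t=1}^T\ell_t(\tr E_t\omega).
\]
This reduces the task to controlling the regret of the linearized game, which is exactly what the RFTL update in Eq.~\eqref{eq: Tsallis-2 update rule} is designed to do.

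Next I would invoke the generic RFTL bound: for the update with regularizer $R$ and rate $\eta$, one has
\[
\Reg_T\le 2\eta\sum_{t=1}^T\|\nabla_t\|_t^{*2}+\frac{R(\omega^*)-R(\omega_1)}{\eta},
\]
where $\|\cdot\|_t^*$ is the dual of the local norm induced by the Hessian of $R$. Two computations remain: evaluating the local dual norm, and bounding the two terms. The crucial structural fact is that $R(\omega)=\|\omega\|_2^2-1$ has constant Hessian equal to twice the identity in the Frobenius geometry, so the local norm is in fact global, $\|X\|_t^2=2\|X\|_2^2$, with dual $\|Y\|_t^{*2}=\tfrac12\|Y\|_2^2$ (here the self-duality of the Frobenius norm stated in the preliminaries is used). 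Hence $\|\nabla_t\|_t^{*2}=\tfrac12\|\nabla_t\|_2^2$.

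For the gradient magnitudes, $L$-Lipschitzness of $\ell_t$ gives $|\ell_t'|\le L$, and combined with $\|E_t\|_2\le\lambda$ this yields $\|\nabla_t\|_2\le L\lambda$, so $\sum_t\|\nabla_t\|_t^{*2}\le\tfrac12 TL^2\lambda^2$. For the regularizer gap, note $\omega_1=I/d$ is the minimizer of $\tr\omega^2$ over $\cS_d$, and any state satisfies $\tr(\omega^*)^2\le\tr\omega^*=1$; therefore $R(\omega^*)-R(\omega_1)=\tr(\omega^*)^2-1/d\le1$. Substituting these bounds gives
\[
\Reg_T\le\eta TL^2\lambda^2+\frac1\eta,
\]
and the choice $\eta=1/(L\lambda\sqrt T)$ balances the two terms to produce $2L\lambda\sqrt T$.

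I expect the only delicate step to be the local-norm computation: one must correctly identify the Hessian of the Tsallis-2 regularizer and its inverse so that the factor $2\eta$ from the RFTL lemma and the factor $\tfrac12$ from the inverse Hessian combine to the advertised constant. The linearization and the two magnitude estimates are routine, and the fact that the squared Frobenius norm is globally (not merely locally) quadratic is what lets the local-norm machinery collapse to a clean global bound.
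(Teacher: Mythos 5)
Your proposal is correct and reaches the paper's exact intermediate bound $\Reg_T\le \eta T L^2\lambda^2+\frac1\eta$, but it gets there by a different route: you invoke the general local-norm RFTL theorem (regret $\le 2\eta\sum_t\|\nabla_t\|_t^{*2}+\frac{R(\omega^*)-R(\omega_1)}{\eta}$) and then observe that the Tsallis-2 regularizer has constant Hessian $2\,\mathrm{id}$ in the Frobenius geometry, so the local dual norm collapses to $\frac12\|\cdot\|_2^2$; the paper instead unrolls that machinery by hand. Concretely, the paper uses only the basic FTL-BTL inequality (Lemma 5.3 of Hazan) to reduce to the stability term $\sum_t\tr\nabla_t(\omega_t-\omega_{t+1})$, then exploits that the Bregman divergence of the quadratic potential is exactly $B_{\Phi_t}(X\|Y)=\|X-Y\|_2^2$, proves a first-order optimality inequality $\tr\bigl((\omega_t-\omega_{t+1})\nabla\Phi_t(\omega_{t+1})\bigr)\ge 0$ over the constrained set $\cS_d$ (their Lemma~\ref{lem: KKT}), and closes with the generalized Cauchy--Schwarz inequality to get $\tr\nabla_t(\omega_t-\omega_{t+1})\le\eta\|\nabla_t\|_2^2$. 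What your version buys is brevity and a transparent explanation of where the constants come from (constant Hessian, self-dual Frobenius norm); what the paper's version buys is self-containedness and an explicit verification, over the density-matrix domain, of the optimality condition that the black-box theorem presupposes. If you keep your route, the only point to be careful about is citing a version of the generic RFTL bound valid for constrained domains and noting that the intermediate point at which the Hessian is evaluated is irrelevant here precisely because the Hessian is constant; your remaining estimates ($\|\nabla_t\|_2\le L\lambda$, $\tr(\omega^*)^2\le 1$, $\omega_1=I/d$ minimizing $\tr\omega^2$) match the paper's and are all sound.
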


This theoretical bound on the regret provides powerful and flexible estimations on the performance of Algorithm 1. On one hand, it achieves an $O(\sqrt{T})$ regret bound, and thus the average regret per round is approaching to $0$ when $T$ goes to infinity. On the other hand, it could assess the performance of regret for measurement operators with specific constraints or conditions. This is well-motivated due to two reasons. First, practically, the ability to perform measurements is limited, and there are usually technical limitations with the near-term quantum hardware. Second, theoretically, there are particular classes of measurements of great interest, such as projective or von Neumann measurements. In the following Corollary, we apply our estimation of regret to the case of limited-rank measurements and then generalize the estimation for general measurements.
\begin{corollary}
\label{coro: tallis2}
Suppose that the rank of each measurement operators is bounded by $M$. Then by setting $\eta = \frac{1}{L\sqrt{TM}}$, the regret of Algorithm 1 satisfies that
\begin{align}\label{eq:rank bound}
\Reg_T\le 2L\sqrt{MT}.
\end{align} 
For general measurement operators, we have
 \begin{align}\label{eq:dimension bound}
\Reg_T \le 2L\sqrt{dT}.
\end{align} 
\end{corollary}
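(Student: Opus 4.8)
The plan is to obtain both bounds as immediate consequences of Theorem~\ref{thm:regret bound} by controlling the single quantity $\lambda = \max_t \|E_t\|_2$ through the structural constraints on the measurement operators. The central observation is that each $E_t$ is a two-outcome measurement, hence a Hermitian operator whose eigenvalues $\mu_1,\dots,\mu_d$ all lie in $[0,1]$.

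First I would establish the norm bound $\|E_t\|_2 \le \sqrt{M}$ whenever $\rank(E_t)\le M$. Since $E_t$ is Hermitian, $\|E_t\|_2^2 = \tr E_t^2 = \sum_{i=1}^d \mu_i^2$. The rank constraint forces at most $M$ of the eigenvalues to be nonzero, and each satisfies $0\le \mu_i \le 1$, so $\mu_i^2 \le 1$. Summing over the at most $M$ nonzero eigenvalues gives $\sum_i \mu_i^2 \le M$, i.e.\ $\|E_t\|_2 \le \sqrt{M}$. Taking the maximum over $t$ yields $\lambda \le \sqrt{M}$.

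Then I would feed this into Theorem~\ref{thm:regret bound}. The one point requiring care is that the theorem's bound $2L\lambda\sqrt{T}$ is attained at $\eta = 1/(L\lambda\sqrt{T})$, which depends on the (possibly unknown) exact value of $\lambda$, whereas the corollary fixes $\eta = 1/(L\sqrt{TM})$ in terms of the a priori bound $M$. To handle this cleanly I would appeal to the intermediate regret estimate produced inside the proof of Theorem~\ref{thm:regret bound}, which holds for every admissible $\eta$ and takes the form $(\text{const})/\eta + \eta L^2\lambda^2 T$; substituting $\eta = 1/(L\sqrt{TM})$ and using $\lambda^2 \le M$ makes each term at most $L\sqrt{MT}$, so their sum is at most $2L\sqrt{MT}$, which is exactly \eqref{eq:rank bound}.

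Finally, the general bound \eqref{eq:dimension bound} is just the special case $M=d$: every operator on $\CC^d$ has rank at most $d$, so $\lambda \le \sqrt{d}$, and repeating the argument with $\eta = 1/(L\sqrt{Td})$ gives $\Reg_T \le 2L\sqrt{dT}$. I do not expect a genuine obstacle here, since the corollary is essentially a specialization of the theorem; the only mild subtlety is the mismatch between the $\eta$ that optimizes the theorem (depending on the true $\lambda$) and the fixed $\eta$ written in terms of the upper bound $M$, and this is resolved by invoking the pre-optimized regret inequality rather than the already-optimized statement of Theorem~\ref{thm:regret bound}.
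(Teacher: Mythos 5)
Your proposal is correct and follows essentially the same route as the paper: bound $\lambda=\max_t\|E_t\|_2$ by $\sqrt{M}$ (respectively $\sqrt{d}$) using the eigenvalue/rank structure of two-outcome measurements, then plug into the regret analysis of Theorem~\ref{thm:regret bound}. Your extra care in invoking the pre-optimized inequality $\frac{1}{\eta}+\eta L^2\lambda^2 T$ to accommodate $\eta=\frac{1}{L\sqrt{TM}}$ rather than $\eta=\frac{1}{L\lambda\sqrt{T}}$ is a legitimate refinement of a step the paper leaves implicit, but it does not change the substance of the argument.
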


\begin{remark}
\label{remark: relation with OGD}
We remark that the Online Gradient Descent (OGD) algorithm proposed in \cite{Yang2020} could also achieve the same regret bound as the RFTL method with Tsallis-2 regularizer by more careful analysis. But their algorithm is less efficient than ours, and they haven't discussed its implementation on quantum computing devices.
\end{remark}

    

\subsection{Variational quantum algorithm for state prediction}\label{sec:near term}
We further introduce a variational quantum algorithm to generate the state in Eq.~\eqref{eq: Tsallis-2 update rule}. In particular, the Tsallis-2 regularizer can be estimated efficiently via the well-known Swap Test~\cite{Buhrman2001,Gottesman2001a} or the recently developed Destructive Swap Test~\cite{Subasi2019}, which allows us to compute the state overlap between two quantum states $\rho$ and $\sigma$. Compared to the general Swap test, the Destructive Swap Test is more practical on near-term devices such as IBM’s and Rigetti’s quantum computers~\cite{Cross2017b,Smith2016}, since it is ancilla-free and costs less circuit depth and the number of the gates. More details could be found in~\cite{Subasi2019}.
Therefore, we introduce the following variational quantum algorithm for state prediction, which could be implemented on near-term quantum devices.

\begin{algorithm}[H] 
\caption{Variational quantum algorithm for state prediction}
\begin{algorithmic}[1]  
\STATE Input: choose the ansatz of unitary $U(\bm\theta)$, tolerance $\varepsilon$, and initial parameters of $\bm\theta$;

\STATE Apply  $U(\bm\theta)$ to three copies of the initial state $\ket{00}_{AR}$, and obtain three copies of the marginal states $\rho_{A_1} = \rho_{A_2} = \rho_{A_3} = \tr_R U(\bm\theta)\proj{00}_{AR}U(\bm\theta)^\dagger$; 
 
\STATE Compute  $\tr (H\rho_{A_1})$  and update cost function $L_1 =  \tr (H\rho_{A_1} )$, with $H = \eta \sum_{s=1}^{t-1} \tr\left(\nabla_{s} \rho_{A_1}\right)$;

\STATE Compute the state overlap $\tr (\rho_{A_2}\rho_{A_3})$ and update the cost function $L_2 =\tr(\rho_{A_2}\rho_{A_{3}})-1$;  

\STATE Perform optimization of $\cL(\bm\theta) = L_1+L_2$ and update parameters of $\bm\theta$;
\STATE Repeat 2-5 until the cost function $\cF_2(\bm\theta)$ converges with tolerance $\varepsilon$;
\STATE Output the state $\rho^{out} = \tr_R U(\bm\theta)\proj{00}_{AR}U(\bm\theta)^\dagger$;
\end{algorithmic}
\end{algorithm}


\section{Learning for small loss Bounds }
\label{sec: small_loss}

Recall that $L_T^* = \min_{\omega \in \cS_d} \sum_{t=1}^T \ell_t(\tr E_t \omega)$. The existing results in~\cite{Aaronson2018a} and our previous results deal with the worst-case data where the adversary is fully adaptive to the learner's strategy, which means that the $L_T^*$ can be as large as $T$ if $b_t$ is chosen arbitrarily. In reality, however, the feedback $b_t$ usually comes from the empirical mean of outcomes from single or multiple measurements we implement at each round. In particular, if we are allowed to have a relatively large amount of state copies to measure at each round, then the difference between $b_t$ and $\trace(E_t\rho)$ will be small, which means $L_T^*$ is small.  So the motivation is to design a more adaptive algorithm that can achieve small regret in such an easier case.

In this section, we obtain a "small loss" bound for $L_2$ norm, which means that the regret bound depends on $L_T^*$ instead of on $T$, so that the regret is small when $L_T^*$ is small.

\subsection{Small loss bound algorithm and results}

\begin{algorithm}[H] 
\caption{Doubling Trick for MEG and RFTL-vonNeumann}
\label{algo: small loss}
\begin{algorithmic}[1] \label{alg:main}
\STATE Input: number of rounds $T$, $\eta<\frac{1}{2}$, dimension $d$.
\STATE Set $T_1 = 1, t=1$.
\FOR[$\beta$ indexes a block]{ $\beta = 1,2,\ldots$} 
    \STATE Set $\eta_\beta = \min\{ \sqrt{\frac{n\ln 2}{2^\beta + 1}}, \frac{1}{2}\}$ \label{line: update learnint rate}
    \WHILE{ $t \leq T$ }
        \IF{$t \neq T_\beta$}
            \STATE \textbf{OPTION I:} predict \quad $\omega_t \leftarrow \textsc{MEGUPDATE}\left(\omega_{t-1},\nabla_{t-1},\eta_\beta\right)$
            \STATE \textbf{OPTION II:} predict \quad $\omega_t \leftarrow \textsc{RFTLUPDATE}\left(\{ \nabla_s\}_{s \in [T_\beta,t-1]},\eta_\beta\right)$
        \ELSE
            \STATE  predict $\omega_t  \leftarrow 2^{-n} \mathbbm{I}$
        \ENDIF
        \STATE Compute $\nabla_t = 2\left(\trace(E_{t}\omega_{t})-b_{t}\right)E_t$
        \STATE Update the cumulative loss inside current block $L_\beta = \sum_{s = T_\beta}^t\ell_s(\tr E_{s}\omega_s)$
        \IF{ $L_\beta \geq 2^\beta$} \label{line: test}
            \STATE $T_{\beta+1} \leftarrow t+1$, $t \leftarrow t+1$ \COMMENT{$T_\beta$ represent the start time of block $\beta$}
            \STATE \textbf{break}
        \ENDIF
        \STATE $t\leftarrow t+1$
    \ENDWHILE
\ENDFOR
\end{algorithmic}
\end{algorithm}

The authors of \cite{Tsuda2005} show that Matrix Exponentiated Gradient (MEG) updates can achieve the optimal small loss bound  $\operatorname{O}\left(\sqrt{nL_T^*} + n \right)$ with a pre-selected learning rate $\eta = \operatorname{O}(\sqrt{\frac{n}{L_T^*}})$. Unfortunately, in practice, $L_T^*$ is usually unknown and can even be chosen adaptively by the adversary during the learning process. 

Here we show a parameter-free small loss algorithm for both MEG and RFTL-vonNeumann by using a doubling trick to adjust the learning rate. The main idea is to use the cumulative loss $\sum_{s=1}^t \ell_s(\omega_s)$ as a proxy for the target hindsight loss $L_T^*$. But directly choosing $\eta_t = \frac{n}{\sum_{s=1}^t \ell_s(\tr E_{s}\omega_s)}$ will cause unstable updates of $\omega_t$. Therefore, we divide the time horizon into several blocks. Each time the cumulative loss within block $\beta$ exceeds $2^\beta$, the algorithm restarts into a new block as shown in line~\ref{line: test} and updates the learning rate as shown in line~\ref{line: update learnint rate}, so that $\eta_t$ always stays between $\left[\sqrt{\frac{n\ln2}{\sum_{s=1}^t \ell_s(\tr E_{s}\omega_s)+1}},\sqrt{\frac{2n\ln2}{\sum_{s=1}^t \ell_s(\tr E_{s}\omega_s)+1}}\right]$, except the first several blocks.

\begin{algorithm}[H] 
\caption{MEGUPDATE($\omega_{t-1},\nabla_{t-1},\eta_t$)}
\begin{algorithmic}[1] \label{alg:update}
    \STATE $G_{t} \leftarrow \log \omega_{t-1} - \eta_t\nabla_{t-1} $
    \STATE $\omega_t \leftarrow \frac{\exp(G_{t})}{\trace (\exp(G_t))}$
    \STATE \textbf{Return} $\omega_t$
\end{algorithmic}
\end{algorithm}

\begin{algorithm}[H] 
\caption{RFTLUPDATE$\left(\{\nabla_s\}_{s \in [T_\beta,t-1]},\eta_t\right)$}
\begin{algorithmic}[1] \label{alg:update}
    \STATE $\omega_t \leftarrow \argmin_{\omega \in \cS_d} \left\{ \eta_t \sum_{s=T_\beta}^{t-1} \left(\nabla_s \omega\right) + \trace(\omega \log \omega) \right\}$
    \STATE \textbf{Return} $\omega_t$
\end{algorithmic}
\end{algorithm}

\begin{theorem}
\label{thm: small loss}
Let $E_1,E_2,\ldots$ be a sequence of adaptively chosen two-outcome measurements on an $n$-qubit state presented to learner. The sequence of losses is defined as $\ell_t(z) = \left(z-b_t\right)^2$, where $b_t$ can also be chosen adaptively and will be presented be learner. Then algorithm~\ref{algo: small loss} with either update method guarantees a near-optimal regret 
\begin{align*}
    \Reg_T \le \operatorname{O}\left(\sqrt{n L_T^*\log T}+n\log T \right)
\end{align*}.
\end{theorem}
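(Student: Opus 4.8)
The plan is to treat the doubling trick as a reduction to a clean per-block regret bound and then to close the loop between the algorithm's own loss and $L_T^*$ through a self-bounding argument. First I would establish a single-block guarantee that holds simultaneously for both update rules. Both \textsc{MEGUpdate} and \textsc{RFTLUpdate} are instances of online mirror descent / follow-the-regularized-leader with the negative von Neumann entropy $R(\omega)=\tr(\omega\log\omega)$ as regularizer, restarted at the maximally mixed state $\omega_{T_\beta}=2^{-n}\mathbbm{I}=\argmin_\omega R(\omega)$. The two quantum facts I need are: (i) $R$ is $1$-strongly convex with respect to the trace norm $\|\cdot\|_1$ (the matrix/quantum Pinsker inequality), whose dual is the operator norm $\|\cdot\|_\infty$; and (ii) on $n$-qubit states the range of $R$ is $\log d=n\ln 2$. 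With these, the standard analysis yields, for any fixed comparator $\omega^\star\in\cS_d$ and the constant step size $\eta_\beta$ used throughout block $\beta$,
\begin{align*}
\Reg_\beta:=\sum_{t\in\beta}\big(\ell_t(\tr E_t\omega_t)-\ell_t(\tr E_t\omega^\star)\big)\le \frac{n\ln 2}{\eta_\beta}+\frac{\eta_\beta}{2}\sum_{t\in\beta}\|\nabla_t\|_\infty^2 .
\end{align*}

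Second I would invoke the self-bounding property of the squared loss. Since $\nabla_t=2(\tr E_t\omega_t-b_t)E_t$ and $0\preceq E_t\preceq\mathbbm I$ gives $\|E_t\|_\infty\le 1$, we get $\|\nabla_t\|_\infty^2\le 4(\tr E_t\omega_t-b_t)^2=4\ell_t(\tr E_t\omega_t)$, so the gradient term is controlled by the in-block cumulative loss $L_\beta$:
\begin{align*}
\Reg_\beta\le \frac{n\ln 2}{\eta_\beta}+2\eta_\beta L_\beta .
\end{align*}
The stopping rule of line~\ref{line: test} guarantees $L_\beta\le 2^\beta+1$ (one extra loss term of size at most $1$), and plugging in $\eta_\beta=\min\{\sqrt{n\ln2/(2^\beta+1)},\tfrac12\}$ yields $\Reg_\beta=\order(\sqrt{n\,2^\beta})$ on blocks where the square-root branch is active, and $\Reg_\beta=\order(n)$ on the $\order(\log n)$ initial blocks where $\eta_\beta$ is capped at $1/2$.

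Third I would sum over blocks against a single global comparator $\omega^\star$. Writing $\widehat L_T=\sum_{t=1}^T\ell_t(\tr E_t\omega_t)$ for the algorithm's total loss and letting $B$ be the number of blocks, each completed block contributes at least $2^\beta$ to $\widehat L_T$, so $2^B-2\le\sum_{\beta<B}2^\beta\le \widehat L_T$; hence $B=\order(\log T)$ and $\sum_{\beta=1}^B 2^\beta=\order(\widehat L_T)$. Summing the per-block bounds and applying Cauchy--Schwarz, $\sum_{\beta}\sqrt{2^\beta}\le\sqrt{B}\,\sqrt{\sum_\beta 2^\beta}=\order(\sqrt{\widehat L_T\log T})$, which (absorbing the capped blocks into the $n\log T$ term) gives $\Reg_T=\order(\sqrt{n\widehat L_T\log T}+n\log T)$. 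Finally I substitute $\widehat L_T=L_T^*+\Reg_T$ to obtain the self-referential inequality $\Reg_T\le C\sqrt{n\log T\,(L_T^*+\Reg_T)}+C'n\log T$, and solving this quadratic for $\Reg_T$ delivers the claimed $\order(\sqrt{nL_T^*\log T}+n\log T)$.

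The main obstacle I anticipate is the first step: proving the single-block inequality uniformly for both the mirror-descent form (MEG) and the FTRL form (RFTL), with the correct constant $n\ln 2$ coming from the regularizer range and the correct dual (operator) norm on the gradient, while handling the restart at $2^{-n}\mathbbm I$ and the learning-rate cap cleanly. The self-bounding step and the quadratic solve are routine once the per-block bound is in place; the only remaining bookkeeping subtlety is ensuring the telescoping across blocks is taken against one fixed $\omega^\star$ and that the (slightly lossy) Cauchy--Schwarz summation, which is what produces the extra $\sqrt{\log T}$ factor, is applied consistently.
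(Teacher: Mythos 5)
Your proposal is correct, and the overall skeleton (restart-per-block bound of the form $n\ln 2/\eta_\beta + 2\eta_\beta L_\beta$, the doubling-trick accounting with $\order(\log T)$ blocks, Cauchy--Schwarz over blocks, and finally solving a self-referential inequality relating the regret to the algorithm's own loss) matches the paper's proof. Where you genuinely diverge is in how the per-block bound is obtained. The paper proves a per-round ``progress'' inequality (its Lemma~\ref{lem: small loss}), tailored to the squared loss in the style of Tsuda et al.\ and Youssry et al.: for MEG it uses the Golden--Thompson inequality, for RFTL it uses the exponential closed form of the update, and a log-partition-function estimate yields the asymmetric coefficients $\eta$ and $\eta/(1-2\eta)$, which after rearrangement give exactly the $\frac{1}{\eta_\beta}\max_\varphi B(\varphi\|2^{-n}\mathbbm{I}) + 2\eta_\beta L_\beta$ block bound. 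You instead invoke the generic OMD/FTRL regret bound for a regularizer that is $1$-strongly convex in trace norm (quantum Pinsker), pay the dual operator norm on the gradients, and then use the self-bounding property $\|\nabla_t\|_\infty^2 \le 4\ell_t(\tr E_t\omega_t)$ of the squared loss to convert the gradient term into $2\eta_\beta L_\beta$. Both routes land on the same block inequality up to constants; yours is more modular and reuses textbook machinery, while the paper's route additionally exhibits the structural equivalence between MEG and RFTL-von Neumann (the exact identity in its Lemma~\ref{lem: variance2}) that it remarks on, and gives slightly sharper constants. Your closing step (substituting $\widehat L_T = L_T^* + \Reg_T$ and solving the quadratic in $\Reg_T$) is an equivalent variant of the paper's solve for $\sqrt{L_T}$; and your worry about the capped-rate blocks is harmless, since the total number of blocks is $\order(\log T)$, so their contribution is absorbed into the $n\log T$ term exactly as in the paper.
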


\begin{corollary}
\label{coro: small loss}
When $L_T^* = 0$, which is the realizable case, the regret becomes $\operatorname{O}(n\log T)$. This almost matches the lower bound $\operatorname{\Omega}(n)$ stated in \cite{Aaronson2018a}.
\end{corollary}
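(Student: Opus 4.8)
The plan is to obtain the corollary as a direct specialization of Theorem~\ref{thm: small loss}, followed by a short comparison against the known lower bound. First I would unpack what the realizable condition $L_T^* = 0$ means operationally. Since each loss is $\ell_t(z) = (z - b_t)^2 \ge 0$, the identity $L_T^* = \min_{\omega \in \cS_d}\sum_{t=1}^T \ell_t(\tr E_t\omega) = 0$ forces the existence of a single hypothesis state $\omega^\star \in \cS_d$ (the minimum is attained because $\cS_d$ is compact and the objective continuous) with $\tr(E_t\omega^\star) = b_t$ for every round $t$; in words, the feedback sequence is exactly consistent with some fixed $n$-qubit state. This confirms that $L_T^* = 0$ is the genuinely ``realizable'' regime and that Theorem~\ref{thm: small loss} applies verbatim, since its hypotheses only require the losses to be of the squared form and impose no lower bound on $L_T^*$.

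Second, I would substitute $L_T^* = 0$ into the bound of Theorem~\ref{thm: small loss}. The first summand obeys $\sqrt{n L_T^* \log T} = 0$ identically, so the bound collapses to
\begin{align*}
\Reg_T \le \order\!\left(\sqrt{n L_T^* \log T} + n\log T\right) = \order(n\log T),
\end{align*}
which is exactly the claimed regret. The one point deserving a sentence of care is that the $\order(\cdot)$ constant in Theorem~\ref{thm: small loss} does not secretly depend on $L_T^*$ being bounded away from zero: inspecting Algorithm~\ref{algo: small loss}, the learning-rate schedule $\eta_\beta = \min\{\sqrt{n\ln 2 / (2^\beta + 1)}, 1/2\}$ is chosen with no reference to $L_T^*$, and the $n\log T$ term arises purely from summing the per-block regret over the $\order(\log T)$ blocks; hence it survives unchanged at $L_T^* = 0$. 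Note also that in this regime $\Reg_T = \widehat L_T - L_T^* = \widehat L_T$, so the statement is equivalently a bound on the algorithm's own cumulative loss.

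Finally, for the ``almost matches'' claim I would invoke the $\Omega(n)$ lower bound of \cite{Aaronson2018a}, which already holds over $n$-qubit states and is not weakened in the realizable regime; comparing $\order(n\log T)$ against $\Omega(n)$ shows the two differ only by the multiplicative factor $\log T$, which justifies the word ``almost''. I expect the genuinely interesting (rather than merely routine) step to be articulating the origin of this residual $\log T$: it is an artifact of the doubling trick, which never lets the learning rate grow unboundedly and therefore restarts $\order(\log T)$ times even when the optimal loss vanishes. Making this explicit pinpoints exactly where the gap to the lower bound comes from and flags whether a horizon-aware or refined restart schedule could in principle remove the logarithmic factor in the realizable case.
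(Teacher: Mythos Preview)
Your proposal is correct and matches the paper's approach: the corollary is stated without a separate proof because it follows immediately by substituting $L_T^* = 0$ into the bound of Theorem~\ref{thm: small loss}, exactly as you do. Your additional discussion of realizability and the source of the residual $\log T$ factor is sound and more detailed than the paper, but the core argument is the same direct specialization.
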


\subsection{Regret Analysis}
The first key step is to get the following lemma, which is inspired by \cite{Tsuda2005}, \cite{Youssry2019}.

\begin{lemma}
\label{lem: small loss}
Given the loss function $\ell_t(z) = \left( z - b_t\right)^2$ with measurement operator $-\mathbbm{I} \leq E_t \leq \mathbbm{I}$. If $\omega_t$ is the result of $\textsc{MEGUPDATE}$ or $\textsc{RFTLUPDATE}$ with learning rate $\eta \in (0,\frac{1}{2})$, then for any state $\rho$, 
\begin{align*}
    \eta \ell_{t-1}(\tr E_{t-1} \omega_{t-1}) - \frac{\eta}{1-2\eta} \ell_{t-1}(\tr E_{t-1} \rho)\leq B(\rho\|\omega_{t-1}) - B(\rho\|\omega_{t})
\end{align*}
where $B(X\|Y) = \trace\left(X\ln X - X \ln Y \right)$, which is the Bregman divergence of von Neumann entropy.

\end{lemma}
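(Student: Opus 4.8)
The plan is to prove the per-step inequality directly from the update rule, reducing the whole statement to an upper bound on the log-partition function of the Gibbs state that the update produces. First I would observe that MEGUPDATE and RFTLUPDATE yield the \emph{same} next iterate: unrolling the MEG recursion from the uniform state at the block's start gives $\omega_t \propto \exp(-\eta\sum_{s=T_\beta}^{t-1}\nabla_s)$, which is exactly the minimizer defining RFTLUPDATE. Hence in both cases $\log\omega_t - \log\omega_{t-1} = -\eta\nabla_{t-1} - (\log Z_t)\mathbbm{I}$ with $Z_t = \tr\exp(\log\omega_{t-1}-\eta\nabla_{t-1})$, so it suffices to analyze this single relation. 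Taking the trace against $\rho$ and using $\tr\rho=1$ yields the exact identity
\[ B(\rho\|\omega_{t-1}) - B(\rho\|\omega_t) = -\eta\,\tr(\rho\nabla_{t-1}) - \log Z_t, \]
which turns the lemma into the task of bounding $\log Z_t$ from above.

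Second, I would control $\log Z_t$ by a chain of three standard (but non-commutative) inequalities. Writing $a = \tr(E_{t-1}\omega_{t-1})$ and recalling $\nabla_{t-1} = 2(a-b_{t-1})E_{t-1}$, the Golden--Thompson inequality decouples the non-commuting exponents, $Z_t \le \tr\!\big(\omega_{t-1}\exp(-\eta\nabla_{t-1})\big)$. Since $-\mathbbm{I}\preceq E_{t-1}\preceq\mathbbm{I}$, the scalar chord bound $e^{sx}\le\cosh s + x\sinh s$ on $[-1,1]$ lifts (both sides being functions of the single operator $E_{t-1}$) to the operator inequality $\exp(sE_{t-1})\preceq\cosh(s)\mathbbm{I}+\sinh(s)E_{t-1}$ with $s=-2\eta(a-b_{t-1})$; multiplying by $\omega_{t-1}\succeq0$ and tracing gives $Z_t\le\cosh s + a\sinh s$. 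Finally, Hoeffding's lemma applied to the two-point distribution with mean $a$ gives $\log(\cosh s + a\sinh s)\le sa + \tfrac12 s^2$, i.e. $\log Z_t \le -2\eta(a-b_{t-1})a + 2\eta^2(a-b_{t-1})^2$.

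Third, I would substitute this bound into the identity and reduce the claim to a purely algebraic scalar inequality, using that the square loss is given explicitly so that $\ell_{t-1}(\tr E_{t-1}\omega_{t-1})=(a-b_{t-1})^2$ and $\tr(\rho\nabla_{t-1})=2(a-b_{t-1})\,\tr(E_{t-1}\rho)$. Setting $r=\tr(E_{t-1}\rho)$, clearing the positive factor $\eta$, and writing $p=a-b_{t-1}$, $q=r-b_{t-1}$ (so that $a-r=p-q$), the required inequality collapses to $\tfrac{1}{1-2\eta}q^2 - 2pq + (1-2\eta)p^2 \ge 0$, which is the perfect square $\big(q/\sqrt{1-2\eta} - \sqrt{1-2\eta}\,p\big)^2\ge 0$. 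This is precisely where the coefficient $\tfrac{\eta}{1-2\eta}$ and the hypothesis $\eta<\tfrac12$ enter: the completion of the square is legitimate exactly when $1-2\eta>0$.

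The main obstacle is the middle step: bounding $\log Z_t$ even though $\log\omega_{t-1}$ and $\nabla_{t-1}$ do not commute. Golden--Thompson is what makes this tractable, and the delicate points are that the chord estimate must be read as an operator inequality and that passing to $\tr(\omega_{t-1}\,\cdot\,)$ preserves the ordering because $\omega_{t-1}\succeq0$. Once $\log Z_t$ is reduced to the scalar second-order bound, the identity from the first step plus the completion of the square in the third step finish the argument with no further analytic difficulty.
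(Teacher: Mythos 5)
Your proof is correct and follows essentially the same route as the paper's: an exact identity for the Bregman-divergence decrement under the entropic update, Golden--Thompson to decouple the non-commuting exponents, a Hoeffding-type bound on the resulting log-partition term, and a final completion of the square in which the hypothesis $\eta<\tfrac{1}{2}$ enters through the factor $1-2\eta$. The only differences are presentational: you unify \textsc{MEGUPDATE} and \textsc{RFTLUPDATE} via the Gibbs closed form (the paper proves two separate lemmas for the two updates), and you work out explicitly the chord/Hoeffding bound on the partition function and the final scalar inequality, which the paper imports from Youssry et al.
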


\begin{remark}
This lemma indicates that MEG and RFTL-vonNeumann is inherently similar. Note that MEG is modified from online mirror descent (OMD) with von Neumann Entropy and RFTL is well known to have a strong connection to OMD (see, e.g., \cite{Hazan2016,Shalev-Shwartz2007,Abernethy2009}). So this lemma matches our intuition behind this connection.
\end{remark}

Then for any block $\beta$ except the last one, by rearranging the equation in Lemma~\ref{lem: small loss} and summing up all the losses within block $\beta$, we have
\begin{align*}
    \sum_{t = T_{\beta}}^{T_{\beta+1}-1} \ell_{t}(\tr E_{t}\omega_{t}) - \ell_{t}(\tr E_{t} \rho) 
    \leq \frac{1}{\eta_\beta}\max_\rho \left\{B(\rho\|2^{-n} \mathbbm{I}) \right\} + 2\eta_\beta \sum_{t=T_\beta}^{T_{\beta+1}-1} \ell_t(\tr E_{t} \omega_t)
\end{align*}

Our learning rate adjusting strategy guarantees that this is always upper bounded by $\order\left( \sqrt{\sum_{t=T_\beta}^{T_{\beta+1}-1} \ell_t(\tr E_{t}\omega_t)}\right)$. So by summing up the losses over all blocks gives us the result. We postpone a complete version of proof to Appendix~\ref{app: small loss}.

\section{Numerical experiments}

In this section, we evaluate the RFTL-Tallis-2 and Doubling Trick algorithms with a series of numerical experiments.  For the first algorithm, we compare the performances of RFTL-Tallis-2 under the full rank measurement and 1-rank measurement, to the performances of RFTL-vonNeumann under those two measurements; For the second algorithm, we compare the performances of RFTL-vonNeumann and MEG equipped with doubling trick (TD) learning rate adjustment strategy to those unequipped.  

\paragraph{Feedbacks}  Suppose there is an underlying unknown quantum state $\rho$, pure or mixed, to be learned. We first consider the noiseless realizable setting for both algorithms. In each round, the learner will observe $b_t = \tr(E_t\rho)$, which is the exact probability of getting "Yes" measurement outcome. In addition, we also consider the noisy non-realizable settings for the small loss experiment. To be specific, we assume that a certain state copy is available for testing at each round, and each test will give us a Bernoulli feedback with "Yes" probability $\tr(E_t\rho)$, with some added small noise. Here we choose the parameters $100$ state copies and $0.05*N(0,0.1)$ Gaussian noise.

\paragraph{Adversary's data generation strategy} For the adaptive adverserial setting, generating a worst-case data sequence to fool an algorithm is challenging. Here we first use the same method in \cite{Yang2020} to generate $E_t = \argmax_E \left( \tr(E(\omega_t - \rho)) \right)^2$ at each round. Moreover, we observe that rank-1 measurement can only maximize the loss in one direction. So if we randomly generate a mixed state $\rho$, then the losses in rank-1 measurement settings are likely to be much smaller than ones in full rank measurement settings, which is not easy for numerical comparison. Therefore we set $\rho$ to be pure so that the spectral norm of $\omega_t - \rho$ becomes larger than that in the mixed state case, and therefore the losses are of the same order of magnitude as those in full rank measurement settings.

\paragraph{Other parameters} We choose the $L_2$ norm loss. We take the number of qubits $n = 4$, so the dimension of the density matrices is $16 \times 16$. For each experiment, we run for 100 trials with randomly generated target states $\rho$ and report the average curves.

\subsection{Experiment Results}

The first experimental results are illustrated in Figure~\ref{fig: experiment} (a). It is easy to see that when the rank of measurement is $1$, Tallis-2 Entropy regularizer takes the advantage over Von Neumann regularizer. On the other hand, when the measurement is full rank, these two algorithms perform almost the same. Notice that theoretically, Von Neumann regularizer should take advantage over Tallis-2 Entropy regularizer in the full rank when $n$ is large. But due to the device limitation, we leave the experiment on high dimensional states for future study.


The second experimental results are illustrated in Figure~\ref{fig: experiment} (b)(c). The results clearly show that the doubling trick makes both RFTL-vonNeumann and MEG more adaptive to the small hindsight loss. Besides, we also observe that the MEG and RFTL-vonNeumann have very close performances, which matches our intuition in Lemma~\ref{lem: small loss}.

\begin{figure}[H]
	\centering
	\begin{adjustwidth}{0.1cm}{0.1cm}
		\begin{tikzpicture}
 		\node at (-5.0,3.0) {\includegraphics[width = 4.55cm]{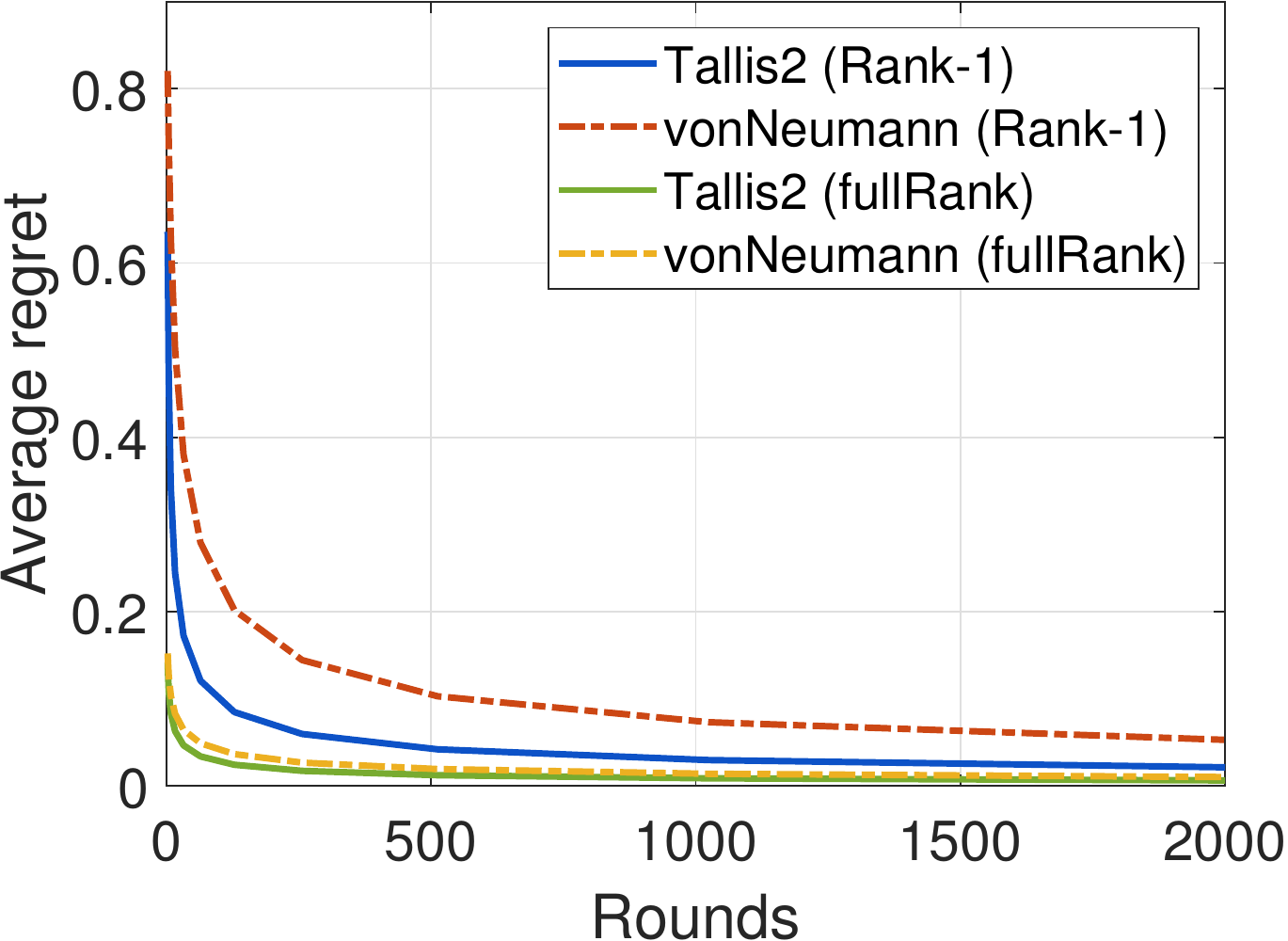}};
		\node at (-5,0.6) {\small (a)  vonNeumann vs. Tallis-2};
		
		\node at (-0.1,3.0) {\includegraphics[width = 4.55cm]{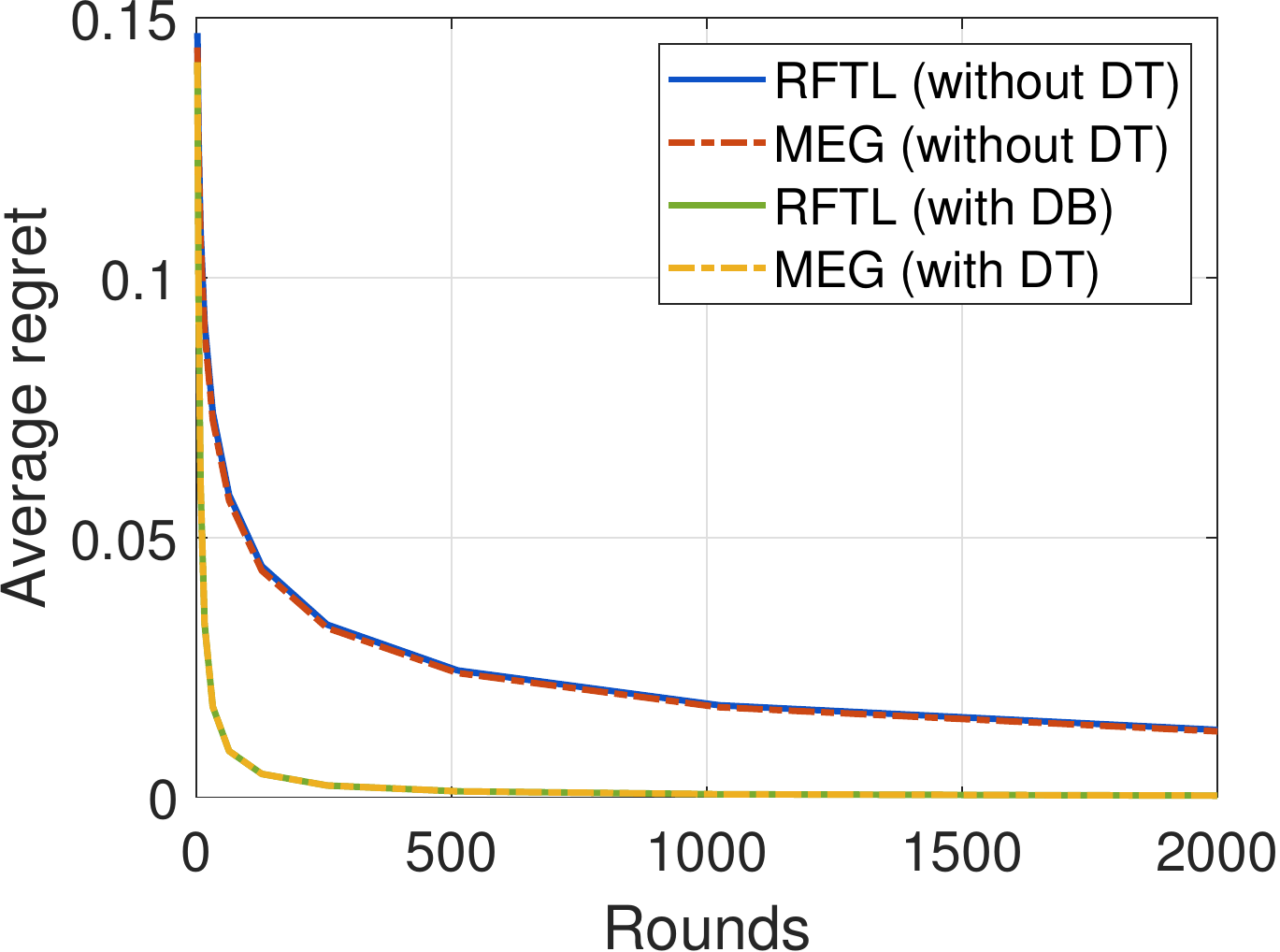}};
		\node at (0.05,0.6) {\small (b) small loss algorithm with $L^* =0$};
		
		\node at (4.8,3.0) {\includegraphics[width = 4.55cm]{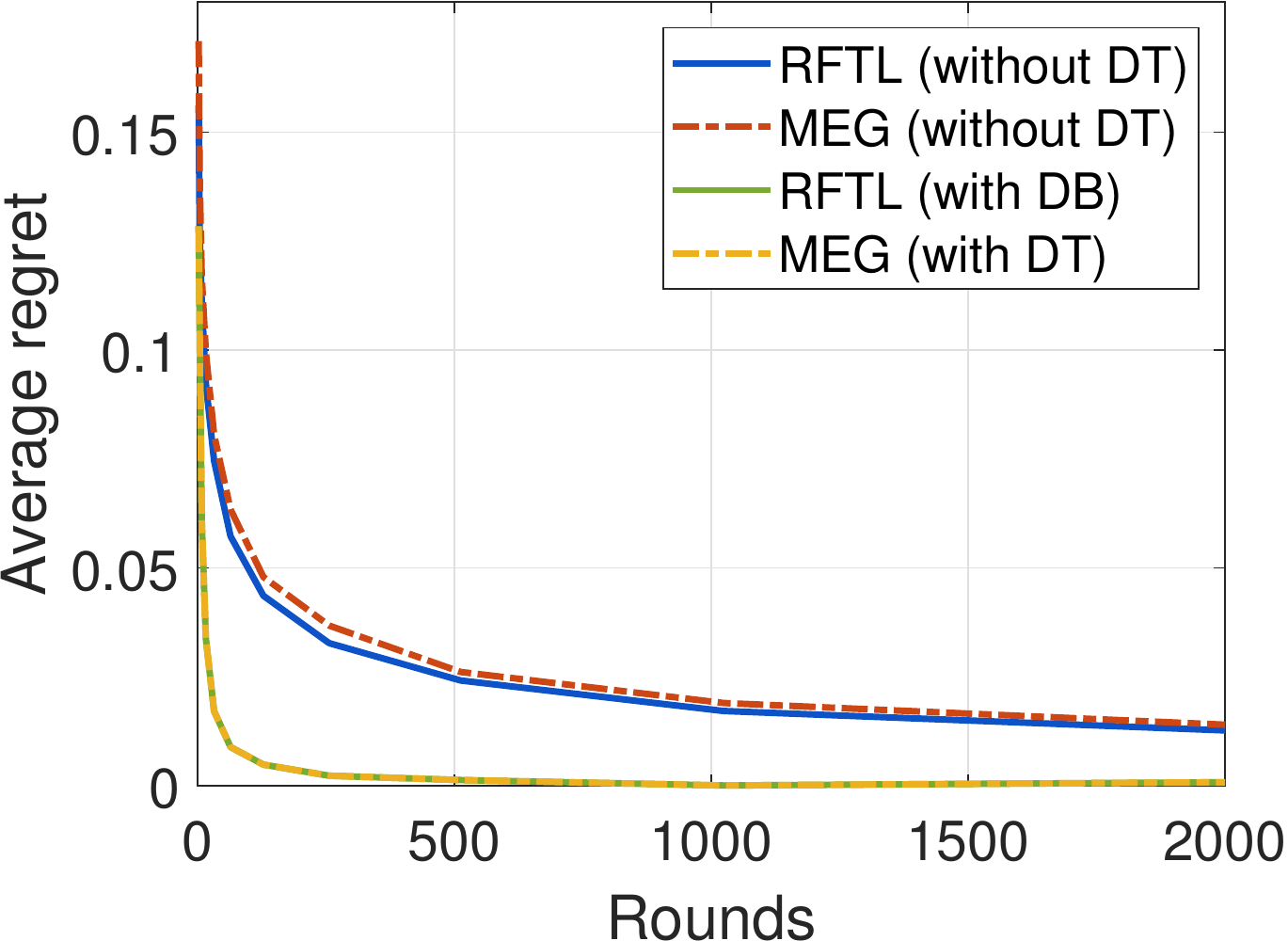}};
		\node at (5.3,0.6) {\small (c) small loss algorithm with $L^*$ is small};
		
		\end{tikzpicture}
	\end{adjustwidth}
\caption{The experimental results for  online learning of $4$-qubit states with average regret as the metric.}
\label{fig: experiment}
\end{figure}



\section{Summary}
In summary, we have developed more practical and adaptive algorithms for online state learning. We have proposed an RFTL algorithm with Tallis-2 entropy that can guarantee to predict the quantum states with regret bounded by $\order(\sqrt{MT})$ for the first $T$ measurements with maximum rank $M$. This regret bound of our algorithm has advantages over previous methods in the setting with low-rank measurements since it only depends on the rank of the measurements rather than the number of qubits. In particular, we have shown a variational hybrid quantum-classical algorithm for state prediction in the RFTL algorithm, which could be run on near-term quantum computers. We have also proposed a parameter-free algorithm that can achieve a regret depending on the loss of best states in hindsight.

For future work, it will be essential to develop tight generalizations of our results to measurements with multiple outcomes. Another interesting direction is to find lower and upper bounds in different practical and adaptive regimes, which could help us better understand the fundamental limits of online state learning. 

\textbf{Acknowledgements}: This work was done when YC was a research intern at Baidu Research.

\newpage

\bibliographystyle{alpha}
\bibliography{smallbib}

\newcommand{\etalchar}[1]{$^{#1}$}
\begin{thebibliography}{dRvEGK13}

\bibitem[Aar07]{Aaronson2007}
Scott Aaronson.
\newblock {The learnability of quantum states}.
\newblock {\em Proceedings of the Royal Society A: Mathematical, Physical and
  Engineering Sciences}, 463(2088):3089--3114, dec 2007.

\bibitem[Aar18]{Aaronson2018b}
Scott Aaronson.
\newblock {Shadow tomography of quantum states}.
\newblock In {\em Proceedings of the 50th Annual ACM SIGACT Symposium on Theory
  of Computing - STOC 2018}, pages 325--338, New York, New York, USA, nov 2018.
  ACM Press.

\bibitem[ACH{\etalchar{+}}18]{Aaronson2018a}
Scott Aaronson, Xinyi Chen, Elad Hazan, Satyen Kale, and Ashwin Nayak.
\newblock {Online Learning of Quantum States}.
\newblock {\em Advances in Neural Information Processing Systems},
  2018-Decem(Nips):8962--8972, feb 2018.

\bibitem[AHK05]{AHK05}
Sanjeev {Arora}, Elad {Hazan}, and Satyen {Kale}.
\newblock Fast algorithms for approximate semidefinite programming using the
  multiplicative weights update method.
\newblock In {\em 46th Annual IEEE Symposium on Foundations of Computer
  Science}, pages 339--348, 2005.

\bibitem[AHK12]{AHK12}
Sanjeev Arora, Elad Hazan, and Satyen Kale.
\newblock The multiplicative weights update method: a meta-algorithm and
  applications.
\newblock {\em Theory of Computing}, 8(6):121--164, 2012.

\bibitem[AHR09]{Abernethy2009}
Jacob~D Abernethy, Elad Hazan, and Alexander Rakhlin.
\newblock {Competing in the dark: An efficient algorithm for bandit linear
  optimization}.
\newblock 2009.

\bibitem[AK07]{AK07}
Sanjeev Arora and Satyen Kale.
\newblock A combinatorial, primal-dual approach to semidefinite programs.
\newblock In {\em Proceedings of the Thirty-Ninth Annual ACM Symposium on
  Theory of Computing}, pages 227--236, New York, NY, USA, June 2007.
  Association for Computing Machinery.

\bibitem[BCL01]{Bruss2001}
Dagmar Bru{\ss}, John Calsamiglia, and Norbert L{\"{u}}tkenhaus.
\newblock {Quantum cloning and distributed measurements}.
\newblock {\em Physical Review A}, 63(4):42308, 2001.

\bibitem[BCWdW01]{Buhrman2001}
Harry Buhrman, Richard Cleve, John Watrous, and Ronald de~Wolf.
\newblock {Quantum Fingerprinting}.
\newblock {\em Physical Review Letters}, 87(16):167902, sep 2001.

\bibitem[Bha13]{Bhatia2013}
Rajendra Bhatia.
\newblock {\em {Matrix analysis}}, volume 169.
\newblock Springer Science {\&} Business Media, 2013.

\bibitem[BWP{\etalchar{+}}17]{Biamonte2017c}
Jacob Biamonte, Peter Wittek, Nicola Pancotti, Patrick Rebentrost, Nathan
  Wiebe, and Seth Lloyd.
\newblock {Quantum machine learning}.
\newblock {\em Nature}, 549(7671):195--202, sep 2017.

\bibitem[CBSG17]{Cross2017b}
Andrew~W Cross, Lev~S Bishop, John~A Smolin, and Jay~M Gambetta.
\newblock {Open quantum assembly language}.
\newblock {\em arXiv preprint arXiv:1707.03429}, 2017.

\bibitem[CHI{\etalchar{+}}18]{Ciliberto2018}
Carlo Ciliberto, Mark Herbster, Alessandro~Davide Ialongo, Massimiliano Pontil,
  Andrea Rocchetto, Simone Severini, and Leonard Wossnig.
\newblock {Quantum machine learning: a classical perspective}.
\newblock {\em Proceedings of the Royal Society A: Mathematical, Physical and
  Engineering Sciences}, 474(2209):20170551, jan 2018.

\bibitem[DB18]{Dunjko2018}
Vedran Dunjko and Hans~J. Briegel.
\newblock {Machine learning and artificial intelligence in the quantum domain:
  a review of recent progress}.
\newblock {\em Reports on Progress in Physics}, 2018.

\bibitem[dRvEGK13]{de_rooij_follow_2013}
Steven de~Rooij, Tim van Erven, Peter~D. Grünwald, and Wouter~M. Koolen.
\newblock Follow the {Leader} {If} {You} {Can}, {Hedge} {If} {You} {Must}.
\newblock {\em arXiv:1301.0534 [cs, stat]}, January 2013.
\newblock arXiv: 1301.0534.

\bibitem[EF02]{Eldar2002}
Yonina~C Eldar and G~David Forney.
\newblock {Optimal tight frames and quantum measurement}.
\newblock {\em IEEE Transactions on Information Theory}, 48(3):599--610, 2002.

\bibitem[GB08]{Grant2008}
Michael Grant and Stephen Boyd.
\newblock {CVX: Matlab software for disciplined convex programming}, 2008.

\bibitem[GC01]{Gottesman2001a}
Daniel Gottesman and Isaac Chuang.
\newblock {Quantum Digital Signatures}.
\newblock {\em arXiv:quant-ph/0105032}, may 2001.

\bibitem[Haz16]{Hazan2016}
Elad Hazan.
\newblock {Introduction to Online Convex Optimization}.
\newblock {\em Foundations and Trends{\textregistered} in Optimization},
  2(3-4):157--325, 2016.

\bibitem[HHJ{\etalchar{+}}17]{Haah2017b}
Jeongwan Haah, Aram~W Harrow, Zhengfeng Ji, Xiaodi Wu, and Nengkun Yu.
\newblock {Sample-optimal tomography of quantum states}.
\newblock {\em IEEE Transactions on Information Theory}, 63(9):5628--5641,
  2017.

\bibitem[JJUW11]{Jain2011a}
Rahul Jain, Zhengfeng Ji, Sarvagya Upadhyay, and John Watrous.
\newblock {QIP = PSPACE}.
\newblock {\em Journal of the ACM}, 58(6):1--27, dec 2011.

\bibitem[JKL{\etalchar{+}}03]{Jozsa2003}
Richard Jozsa, Masato Koashi, Noah Linden, Sandu Popescu, Stuart Presnell, Dan
  Shepherd, and Andreas Winter.
\newblock {Entanglement cost of generalised measurements}.
\newblock {\em Quantum Information and Computation}, 3(5):405--422, 2003.

\bibitem[Kha80]{Khachiyan1980}
Leonid~G Khachiyan.
\newblock {Polynomial algorithms in linear programming}.
\newblock {\em USSR Computational Mathematics and Mathematical Physics},
  20(1):53--72, 1980.

\bibitem[NC10]{Nielsen2010}
Michael~A Nielsen and Isaac~L Chuang.
\newblock {\em {Quantum computation and quantum information}}.
\newblock Cambridge university press, 2010.

\bibitem[OW16]{ODonnell2016}
Ryan O'Donnell and John Wright.
\newblock {Efficient quantum tomography}.
\newblock In {\em Proceedings of the forty-eighth annual ACM symposium on
  Theory of Computing}, pages 899--912, 2016.

\bibitem[SCC19]{Subasi2019}
Yigit Subasi, Lukasz Cincio, and Patrick~J. Coles.
\newblock {Entanglement spectroscopy with a depth-two quantum circuit}.
\newblock {\em Journal of Physics A: Mathematical and Theoretical},
  52(4):044001, jan 2019.

\bibitem[SCZ16]{Smith2016}
Robert~S Smith, Michael~J Curtis, and William~J Zeng.
\newblock {A practical quantum instruction set architecture}.
\newblock {\em arXiv preprint arXiv:1608.03355}, 2016.

\bibitem[SS11]{Shalev-Shwartz2011a}
Shai Shalev-Shwartz.
\newblock {Online Learning and Online Convex Optimization}.
\newblock {\em Foundations and Trends{\textregistered} in Machine Learning},
  4(2):107--194, 2011.

\bibitem[SSS07]{Shalev-Shwartz2007}
Shai Shalev-Shwartz and Yoram Singer.
\newblock {A primal-dual perspective of online learning algorithms}.
\newblock {\em Machine Learning}, 69(2-3):115--142, oct 2007.

\bibitem[TRK05]{Tsuda2005}
Koji Tsuda, Gunnar R{\"{a}}tsch, and Manfred K.warmuth.
\newblock {Matrix exponentiated gradient updates for on-line learning and
  bregman projection}.
\newblock {\em Journal of Machine Learning Research}, 6:995--1018, 2005.

\bibitem[VB96]{Vandenberghe1996}
Lieven Vandenberghe and Stephen Boyd.
\newblock {Semidefinite Programming}.
\newblock {\em SIAM Review}, 38(1):49--95, mar 1996.

\bibitem[Wat09]{Watrous2009}
John Watrous.
\newblock {Semidefinite Programs for Completely Bounded Norms}.
\newblock {\em Theory of Computing}, 5(1):217--238, 2009.

\bibitem[WWS20]{Wang2018}
Xin Wang, Mark~M Wilde, and Yuan Su.
\newblock {Efficiently Computable Bounds for Magic State Distillation}.
\newblock {\em Physical Review Letters}, 124(9):090505, mar 2020.

\bibitem[YFT19]{Youssry2019}
Akram Youssry, Christopher Ferrie, and Marco Tomamichel.
\newblock {Efficient online quantum state estimation using a
  matrix-exponentiated gradient method}.
\newblock {\em New Journal of Physics}, 21(3):033006, mar 2019.

\bibitem[YJZS20]{Yang2020}
Feidiao Yang, Jiaqing Jiang, Jialin Zhang, and Xiaoming Sun.
\newblock {Revisiting Online Quantum State Learning}.
\newblock In {\em Proceedings of the 34nd AAAI Conference on Artifical
  Intelligence}, 2020.

\end{thebibliography}

\appendix 

\section{Analysis for Algorithm~\ref{alg:rftl} and other related results}
\label{app: tallis2}

\subsection{SDP for the prediction subroutine (Proposition~\ref{prop: sdp})}
We have
\begin{align}
&{ \min_{\omega\in \cS_d} }\left\{\eta \sum_{s=1}^{t-1} \tr\left(\nabla_{s} \omega\right)-S_2(\omega)\right\} \\
= &{ \min_{\omega\in \cS_d} }\left\{\eta \sum_{s=1}^{t-1} \tr\left(\nabla_{s} \omega\right) + \tr \rho^2 - 1\right\} \\
 = &\min_{\omega\in \cS_d}\left\{\eta \sum_{s=1}^{t-1} \tr\left(\nabla_{s} \omega\right) + \tr Q - 1: Q\ge \rho I^{-1} \rho \right\}\\
= & {\min }\left\{\eta \sum_{s=1}^{t-1} \tr\left(\nabla_{s} \omega\right)+ \tr Q -1: 
\left[ {\begin{array}{*{20}{c}}
Q&\omega \\
\omega &I
\end{array}} \right] \ge 0, \omega\ge0, \tr\omega=1\right\},
\end{align}
where the last equality uses the Schur complement characterization of the block positive semidefinite operator.

\subsection{Proof of Theorem~\ref{thm:regret bound} and other related results}
\begin{proof} \textbf{of Theorem~\ref{thm:regret bound}}
Note that the loss function is convex, then for any state $\varphi$,
\begin{align}
\ell_t(\tr E_t\omega_t) - \ell_t(\tr E_t\varphi) \le \ell_t'(\tr E_t\omega_t)[\tr E_t\omega_t -\tr E_t\varphi] = \tr\nabla_t(\omega_t-\varphi).
\end{align}
Taking the sum over $t$, we have
\begin{align}
\sum_{t=1}^T [\ell_t(\tr E_t\omega_t) - \ell_t(\tr E_t\varphi)] &\le \sum_{t=1}^T [\tr \nabla_t\omega_t - \tr \nabla_t\varphi]\\
& \le \sum_{t=1}^T [\tr \nabla_t(\omega_t -\omega_{t+1}) + \frac{D_S^2}{\eta}], \label{eq:regret upper bound DR}
\end{align}
where $D_S := \max_{\varphi_1,\varphi_2\in\cS_d}[S_2(\varphi_1) - S_2(\varphi_2)]$. Moreover, the last inequality is due to Lemma 5.3 in \cite{Hazan2016}. Using the definition of Tsallis-$2$ entropy, i.e., $S_2(\varphi) = 1-\tr\varphi^2$,  it is not difficult to find that $D_S = 1-\frac{1}{d}$ and the maximizer is $\varphi_1=I/d$, $\varphi_2=\proj 0$. Therefore, it holds that $D_s\le 1$.

The next step is to bound the quantity  $\sum_{t=1}^T \tr \nabla_t(\omega_t -\omega_{t+1})$. Let us denote 
\begin{align}\label{eq:phi t}
\Phi_t(X) = \eta \sum_{s=1}^t \tr\nabla_sX + \tr X^2 -1
\end{align}
and assume that the optimal solution or minimizer of $\min_{X\in\cS_n}\Phi_t(X)$ is $\omega_{t+1}$. We further introduce 
\begin{align}
\nabla{\Phi_t}(X):=\eta\sum_{s=1}^t\nabla_s +2X.
\end{align}

Recall the definition of the Bregman divergence, we have that 
\begin{align}\label{eq:bregman norm}
B_{\Phi_t}(X\|Y)&= \Phi_t(X) - \Phi_t(Y) - \tr\nabla\Phi_t(Y)(X-Y) \\
& = \tr (X-Y)^2\\
& = \|X-Y\|_2^2.
\end{align}
Indeed, for any $t$, we have   $B_{\Phi_t}(X\|Y) = \|X-Y\|_2^2$.
 
Furthermore, taking $X=\omega_t$ and $Y=\omega_{t+1}$, we have that
\begin{align}
\Phi_t(\omega_t) &= \Phi_t(\omega_{t+1}) + \tr (\omega_t-\omega_{t+1})\nabla\Phi_t(\omega_{t+1}) + B_{\Phi_t}(\omega_t\|\omega_{t+1})\\
& \ge \Phi_t(\omega_{t+1}) + B_{\Phi_t}(\omega_t\|\omega_{t+1}). \label{eq:B upper}
\end{align}
The above inequality follows due to Lemma~\ref{lem: KKT}, which proves that $\tr (\omega_t-\omega_{t+1})\nabla\Phi_t(\omega_{t+1})\ge 0$.

Therefore, we have
\begin{align}
\|\omega_t-\omega_{t+1}\|_2^2 & = B_{\Phi_t}(\omega_t\|\omega_{t+1})\\ & \le \Phi_t(\omega_t) - \Phi_t(\omega_{t+1})\\
& = \Phi_{t-1}(\omega_t) - \Phi_{t-1}(\omega_{t+1}) + \eta \tr \nabla_t(\omega_t - \omega_{t+1})\\ 
&\le \eta\tr \nabla_t(\omega_t - \omega_{t+1}). \label{eq:B Dt w}
\end{align}
The first inequailty follows due to Eq.~\eqref{eq:B upper}. The last inequality follows since $\omega_t$ is the minimizer.

Applying the generalized Cauchy-Schwartz inequality~\cite{Bhatia2013}, we have
\begin{align}
\tr \nabla_t(\omega_t-\omega_{t+1}) &\le \|\nabla_t\|_2\|\omega_t-\omega_{t+1}\|_2\\
& \le \|\nabla_t\|_2\cdot \sqrt{\tr \eta\nabla_t(\omega_t - \omega_{t+1})},
\end{align}
where the last inequality utilizes the inequality in Eq.~\eqref{eq:B Dt w}. Immediately, noting that $\lambda= \max_t \|E_t\|_2$, we could obtain
\begin{align}
\tr \nabla_t(\omega_t - \omega_{t+1}) \le \eta \|\nabla_t\|^2_2
\le \eta \lambda^2L^2,\label{eq: Delta t upper bound}
\end{align}
where the last inequality follows since $\|\nabla_t\|_2 = \|\ell'_t(\tr E_t\omega_t)E_t\|_2\le L\|E_t\|_2\le L\lambda$.

Combining Eq.~\eqref{eq:regret upper bound DR} and Eq.~\eqref{eq: Delta t upper bound}, we have
\begin{align}
\sum_{t=1}^T [\ell_t(\tr E_t\omega_t) - \ell_t(\tr E_t\varphi)] & \le  \sum_{t=1}^T [\tr \nabla_t(\omega_t -\omega_{t+1}) + \frac{D_R^2}{\eta}] \\
& \le \eta T \lambda^2L^2 + \frac{1}{\eta}.
\end{align} 
By setting $\eta = \frac{1}{L\lambda\sqrt{T}}$, we have
\begin{align}
\sum_{t=1}^T [\ell_t(\tr E_t\omega_t) - \ell_t(\tr E_t\varphi)]   
& \le 2L\lambda\sqrt{T}.
\end{align}

\end{proof}

\begin{proof} \textbf{of Corollary~\ref{coro: tallis2}}
    The inequality in Eq.~\eqref{eq:rank bound} follows since $\|E\|_2\le \sqrt M$ for a rank-$M$ measurement operator.
    The inequality in Eq.~\eqref{eq:dimension bound} follows since the rank of an arbitrary measurement operator is bounded by the dimension $d$.
\end{proof}

\begin{theorem}[Explaination of Remark~\ref{remark: relation with OGD}]
By running the OGD algorithm in \cite{Yang2020} with $\eta = \frac{1}{L\sqrt{MT}}$, we can also get
\begin{align*}
     \Reg_T \leq \operatorname{O}(L\sqrt{MT})
\end{align*}
\end{theorem}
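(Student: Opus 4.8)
The plan is to run the standard Online Gradient Descent (OGD) analysis over the convex set of density matrices $\cS_d$ equipped with the Frobenius inner product, and to observe that the two problem-specific constants---the Frobenius diameter of $\cS_d$ and the bound on the gradient norm---are exactly what is needed to produce an $\order(L\sqrt{MT})$ bound once the low-rank structure of the measurements is exploited. The whole argument is the textbook OGD regret proof with the quantum constants substituted in.

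First I would write the OGD update as a Euclidean (Frobenius) projection $\omega_{t+1}=\Pi_{\cS_d}(\omega_t-\eta\nabla_t)$, where $\nabla_t=\ell_t'(\tr E_t\omega_t)E_t$ exactly as in Algorithm~\ref{alg:rftl}. The workhorse is the standard OGD regret inequality (see, e.g., \cite{Hazan2016}): combining convexity of $\ell_t$, which gives $\ell_t(\tr E_t\omega_t)-\ell_t(\tr E_t\varphi)\le \tr\nabla_t(\omega_t-\varphi)$, with the nonexpansiveness of the projection $\Pi_{\cS_d}$ (valid because the comparator $\varphi\in\cS_d$), one telescopes the terms $\|\omega_t-\varphi\|_2^2-\|\omega_{t+1}-\varphi\|_2^2$ to obtain
\begin{align*}
\Reg_T \le \frac{D^2}{2\eta} + \frac{\eta}{2}\sum_{t=1}^T \|\nabla_t\|_2^2,
\end{align*}
where $D:=\max_{\varphi_1,\varphi_2\in\cS_d}\|\varphi_1-\varphi_2\|_2$ is the Frobenius diameter of $\cS_d$.

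Next I would pin down the two constants. For the diameter, $\|\varphi_1-\varphi_2\|_2^2=\tr\varphi_1^2-2\tr\varphi_1\varphi_2+\tr\varphi_2^2\le 2$, with equality for orthogonal pure states, so $D\le\sqrt2$. For the gradient, the same estimate used in the proof of Theorem~\ref{thm:regret bound} gives $\|\nabla_t\|_2=\|\ell_t'(\tr E_t\omega_t)E_t\|_2\le L\|E_t\|_2\le L\sqrt M$, where the last step uses that a rank-$M$ operator with eigenvalues in $[0,1]$ has Frobenius norm at most $\sqrt M$---precisely the bound invoked in Corollary~\ref{coro: tallis2}. Substituting $D^2\le 2$ and $\|\nabla_t\|_2^2\le L^2M$ into the regret inequality and setting $\eta=\frac{1}{L\sqrt{MT}}$ yields $\Reg_T\le L\sqrt{MT}+\tfrac12 L\sqrt{MT}=\order(L\sqrt{MT})$.

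I do not expect a serious obstacle, since the argument is the classical OGD analysis and both constants are already computed elsewhere in the paper. The only points deserving care are confirming that the Frobenius projection onto $\cS_d$ is well-defined and nonexpansive (which holds because $\cS_d$ is a closed convex subset of a finite-dimensional Hilbert space), and matching constants so that the prescribed $\eta=\frac{1}{L\sqrt{MT}}$---which differs from the bound-minimizing choice $\eta=\sqrt2/(L\sqrt{MT})$ only by a constant factor---still lands inside the $\order(\cdot)$.
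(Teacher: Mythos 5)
Your proposal is correct and follows essentially the same route as the paper: the paper's proof simply invokes the per-round OGD inequality from Appendix~B of \cite{Yang2020} in the Frobenius norm and plugs in the bounds $\|\nabla_t\|_2\le L\sqrt{M}$ and a constant Frobenius diameter of $\cS_d$, exactly the two constants you compute explicitly. Your write-up is just a self-contained version of that same textbook OGD analysis, so there is nothing substantive to change.
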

\begin{proof}
    The proof follows the same manner in Appendix B \cite{Yang2020} but again we use the property of Frobenius norm. 
    
    According to their proof, we will get
    \begin{align*}
        \tr(\nabla_t (\omega_t - \varphi)) 
        &\leq  \frac{\|\omega_t - \varphi\|_F^2 - \|\omega_{t+1} - \varphi\|_F^2}{2\eta} + \frac{1}{2}\eta \|\nabla_t\|_F^2\\
        &\leq  \frac{\|\omega_t - \varphi\|_F^2 - \|\omega_{t+1} - \varphi\|_F^2}{2\eta} + \frac{1}{2}\eta L^2 M
    \end{align*}
    So by choosing $\eta = \frac{1}{L\sqrt{dT}}$, we get
    \begin{align*}
        \Reg_T \leq \frac{1}{2\eta} \max_\varphi \|\omega_1 -\varphi\|_F^2  + \frac{1}{2}\eta L^2 M \leq \operatorname{O}(L\sqrt{MT})
    \end{align*}
\end{proof}

\subsection{Auxiliary Lemma}
\begin{lemma}
\label{lem: KKT}
For all $t \in \{1,2,\ldots,T \}$ and $\Phi_t = \eta\sum_{s=1}^t \tr \nabla_sX + \trace X^2 - 1$  
\begin{align}
    \tr (\omega_t-\omega_{t+1})\nabla\Phi_t(\omega_{t+1})\ge 0
\end{align}
\end{lemma}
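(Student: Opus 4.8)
The statement is precisely the first-order optimality (variational inequality) condition for minimizing the convex function $\Phi_t$ over the convex set $\cS_d$, so the plan is to verify the hypotheses of that condition and then specialize it. First I would record that $\Phi_t(X) = \eta\sum_{s=1}^t \tr(\nabla_s X) + \tr X^2 - 1$ is a convex, continuously differentiable function of $X$: the first term is linear and $\tr X^2 = \|X\|_2^2$ is convex, while its gradient is exactly $\nabla\Phi_t(X) = \eta\sum_{s=1}^t \nabla_s + 2X$ as defined in the excerpt (here each $\nabla_s = \ell_s'(\tr E_s\omega_s)E_s$ is Hermitian). I would also note that the feasible set $\cS_d$ of density operators is convex and that, by construction in Eq.~\eqref{eq: Tsallis-2 update rule}, $\omega_{t+1}$ is a global minimizer of $\Phi_t$ over $\cS_d$, whereas $\omega_t$ is likewise an element of $\cS_d$ (being the minimizer of $\Phi_{t-1}$).

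The key step is to derive the variational inequality $\tr(\nabla\Phi_t(\omega_{t+1})(X - \omega_{t+1})) \ge 0$ for every $X \in \cS_d$. To do this I would use convexity of $\cS_d$: for any $X \in \cS_d$ and $s \in [0,1]$, the point $\omega_{t+1} + s(X - \omega_{t+1})$ stays in $\cS_d$. Defining $g(s) := \Phi_t(\omega_{t+1} + s(X-\omega_{t+1}))$ on $[0,1]$, minimality of $\omega_{t+1}$ forces $g$ to attain its minimum at $s=0$, hence $g'(0) \ge 0$. Computing the derivative by the chain rule gives $g'(0) = \tr(\nabla\Phi_t(\omega_{t+1})(X - \omega_{t+1}))$, which establishes the inequality.

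Finally I would specialize the variational inequality to $X = \omega_t$, which is admissible since $\omega_t \in \cS_d$, obtaining $\tr(\nabla\Phi_t(\omega_{t+1})(\omega_t - \omega_{t+1})) \ge 0$; using $\tr(AB) = \tr(BA)$ this is exactly the claimed $\tr((\omega_t - \omega_{t+1})\nabla\Phi_t(\omega_{t+1})) \ge 0$. I do not expect a serious obstacle here, since this is the standard KKT/first-order condition for convex programs. The only points requiring a little care are making sure the minimizer in Eq.~\eqref{eq: Tsallis-2 update rule} is genuinely a global optimum over $\cS_d$ (which follows from convexity of $\Phi_t$, so any stationary point in the feasible direction is optimal) and being explicit that each $\nabla_s$ is Hermitian so that the trace-pairing directional derivative is computed correctly. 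Alternatively one could bypass the auxiliary function $g$ and invoke the strong convexity of $\Phi_t$ directly, but the line-segment argument is the cleanest route.
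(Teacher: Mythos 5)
Your proposal is correct and matches the paper's argument in substance: both establish the first-order optimality (variational inequality) condition for the convex program defining $\omega_{t+1}$ over the convex set $\cS_d$ and then take the feasible direction toward $\omega_t$. The paper phrases it contrapositively, following Claim 14 of Aaronson et al.\ — if $\tr (\omega_t-\omega_{t+1})\nabla\Phi_t(\omega_{t+1})<0$, a sufficiently small step $\omega_{t+1}+a(\omega_t-\omega_{t+1})$ would beat the minimizer, using the exact quadratic (Bregman) expansion of $\Phi_t$ — which is the same line-segment argument as your $g'(0)\ge 0$, so no substantive difference.
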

\begin{proof}
This proof again along the lines of Claim 14 in \cite{Aaronson2018a}. The only difference is that, instead of having 
\begin{align*}
    \frac{\Phi_t(\overline{X}) - \Phi_t(\omega_{t+1})}{a} \le \nabla \Phi_t(\omega_{t+1}) \cdot (\omega_t - \omega_{t+1}) + \frac{\tr((\omega_t - \omega_{t+1})^2)}{\lambda_{\min}(\omega_{t+1)}}
\end{align*}
we have
\begin{align*}
    \frac{\Phi_t(X) - \Phi_t(\omega_{t+1})}{a}
    & \le \nabla \Phi_t(\omega_{t+1}) \cdot (\omega_t - \omega_{t+1}) + \frac{B_{\Phi_t}(\overline{X}\| \omega_{t+1})}{a}\\
    & = \nabla \Phi_t(\omega_{t+1}) \cdot (\omega_t - \omega_{t+1}) + \tr(\Delta^2)\\
    & = \nabla \Phi_t(\omega_{t+1}) \cdot (\omega_t - \omega_{t+1}) + a \tr((\omega_t - \omega_{t+1})^2)
\end{align*}
where $\Delta$ is defined the same as in that Claim 14. Again we can find $a$ small enough to show the contrast as in the previous proof.
\end{proof}

\section{Analysis for Algorithm~\ref{algo: small loss} and other related results}
\label{app: small loss}

\subsection{Proof of Lemma~\ref{lem: small loss}} 
Before the proof, we want to remind that here $B(X\|Y) = \trace\left(X\ln X - X \ln Y \right)$, which is the Bregman divergence of von Neumann entropy. It should be actually represented as $B_{\Phi_t}(X\|Y)$ but we simplified the notation. This is different from the previous proof, where $B_{\Phi_t}(X\|Y)$ is the Bregman divergence of Tallis-2 entropy.

The proof is inspired by Lemma 2 \cite{Youssry2019} and Lemma 3.1 \cite{Tsuda2005}. In particular, their proofs only work for MEG algorithm and we generalized their proofs to both MEG and RFTL-vonNeumann. 

For convenience, suppose $t \in [T_\beta,T_{\beta+1}-1]$ for some $\beta$ so that $\eta_t = \eta_\beta$. Also let  $\delta_t = -2\eta_\beta(\trace(E_t\omega_t - b_t))$ in all the following proofs.
\begin{lemma}
\label{lem: variance1}
If $\omega_t = \textsc{MEGUPDATE}\left(\omega_{t-1},\nabla_{t-1},\eta_t \right)$, then for any state $\varphi$,
\begin{align*}
    B(\varphi\|\omega_t) - B(\varphi\|\omega_{t+1}) \geq \delta_t \trace\left(\varphi E_t \right) - \log \left( \trace\left[ \exp(\delta_t E_t)\omega_{t} \right] \right)
\end{align*}
\end{lemma}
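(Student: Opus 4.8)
The plan is to unfold the Bregman divergence of the von Neumann entropy, substitute the closed form of the \textsc{MEGUPDATE} step for $\ln\omega_{t+1}-\ln\omega_t$, and then control the resulting log-partition term by the Golden--Thompson inequality. Throughout I treat the analyzed step as the one producing $\omega_{t+1}$ from $\omega_t$, namely $\omega_{t+1}\propto\exp(\log\omega_t-\eta_t\nabla_t)$, and I use the identity $\delta_tE_t=-\eta_t\nabla_t$ that follows from $\nabla_t=2(\trace(E_t\omega_t)-b_t)E_t$ together with $\delta_t=-2\eta_t(\trace(E_t\omega_t)-b_t)$. Since every iterate is of the form $\exp(\cdot)/Z$, the states stay positive definite, so $\log\omega_t$ is well defined and $\exp(\log\omega_t)=\omega_t$.

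First I would note that since $B(X\|Y)=\trace(X\ln X-X\ln Y)$, the self-entropy term $\trace(\varphi\ln\varphi)$ cancels in the difference, leaving
\begin{align*}
B(\varphi\|\omega_t)-B(\varphi\|\omega_{t+1})=\trace\big(\varphi(\ln\omega_{t+1}-\ln\omega_t)\big).
\end{align*}
Writing the update as $\omega_{t+1}=\exp(\log\omega_t+\delta_tE_t)/Z_t$ with partition function $Z_t:=\trace\exp(\log\omega_t+\delta_tE_t)$ and taking the matrix logarithm gives $\ln\omega_{t+1}=\log\omega_t+\delta_tE_t-(\log Z_t)\mathbbm{I}$. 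Substituting and using that $\varphi$ is a state, so $\trace\varphi=1$, yields the exact identity
\begin{align*}
B(\varphi\|\omega_t)-B(\varphi\|\omega_{t+1})=\delta_t\trace(\varphi E_t)-\log Z_t.
\end{align*}

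It then remains to bound $\log Z_t$ from above. The hard part is exactly here: because $\log\omega_t$ and $E_t$ need not commute, the partition function does not factor as $\trace(\exp(\log\omega_t)\exp(\delta_tE_t))$, and a genuine operator inequality is required. I would invoke the Golden--Thompson inequality $\trace\exp(A+B)\le\trace(\exp A\,\exp B)$ with $A=\log\omega_t$ and $B=\delta_tE_t$ to obtain
\begin{align*}
Z_t\le\trace\big(\exp(\log\omega_t)\exp(\delta_tE_t)\big)=\trace\big(\exp(\delta_tE_t)\,\omega_t\big).
\end{align*}
Since $\log$ is monotone, this gives $-\log Z_t\ge-\log\trace(\exp(\delta_tE_t)\omega_t)$, and combining with the identity above produces the claimed bound.

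The only points needing care are the index bookkeeping (confirming that the update step under analysis is the $t\to t+1$ one, consistent with the off-by-one indexing already present in the statement of Lemma~\ref{lem: small loss}) and the sign convention in $\delta_t$; once these are pinned down, the remaining manipulations form a short and essentially mechanical chain, with Golden--Thompson carrying the entire analytic content.
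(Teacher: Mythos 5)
Your proposal is correct and follows essentially the same route as the paper's own proof: expand the Bregman difference so the $\trace(\varphi\ln\varphi)$ terms cancel, substitute the closed form of the MEG update (using $\delta_t E_t=-\eta_t\nabla_t$ and $\trace\varphi=1$) to get the exact identity $\delta_t\trace(\varphi E_t)-\log Z_t$, and control the partition function with the Golden--Thompson inequality. Your explicit handling of the $t\to t+1$ index bookkeeping and the positive-definiteness of the iterates is consistent with what the paper does implicitly.
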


\begin{proof}
\begin{align*}
    B(\varphi\|\omega_t) - B(\varphi\|\omega_{t+1})
    & = \trace(\varphi \log(\varphi) - \varphi\log(\omega_t) - \trace(\varphi \log(\varphi) - \varphi\log(\omega_{t+1}))\\
    & = -\trace(\varphi \log(\omega_t)) + \trace(\varphi \log(\omega_{t+1}))\\
    & = -\trace(\varphi \log(\omega_t)) + \trace(\varphi \log(\omega_{t})) + \trace(-\varphi \eta_\beta \nabla_t) - \trace(\varphi \log(\trace(\exp(\log(\omega_t)-\eta_\beta \nabla_t)))\\
    & = \delta_t \trace(\varphi E_t) - \log(\trace(\exp(log(\omega_t)+\delta_tE_t)))\\
    & \geq \delta_t \trace\left(\varphi E_t \right) - \log \left( \trace\left(\exp(\delta_t E_t)\omega_{t} \right) \right) \quad \text{( By Golden-Thompson Inequality) }
\end{align*}
The third equation comes from the definition of $\omega_{t+1}$
\end{proof}

\begin{lemma}
\label{lem: variance2}
If $\omega_t =\textsc{RFTLUPDATE}\left(\{\nabla_s\}_{s \in [T_\beta,t-1]},\eta_t\right)$, then for any state $\varphi$,
\begin{align*}
    B(\varphi\|\omega_t) - B(\varphi\|\omega_{t+1}) = \delta_t \trace\left(\varphi E_t \right) - \log \left( \trace\left[ \exp(\delta_t E_t)\omega_{t} \right] \right)
\end{align*}
\end{lemma}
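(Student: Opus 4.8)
The plan is to exploit that, in contrast to the greedy MEG step, \textsc{RFTLUPDATE} returns the Gibbs (exponential-weights) state in closed form, so the divergence difference can be evaluated \emph{exactly} rather than merely bounded. First I would solve the convex program defining \textsc{RFTLUPDATE}. Minimizing $\eta_\beta\sum_{s=T_\beta}^{t-1}\trace(\nabla_s\omega)+\trace(\omega\log\omega)$ over $\omega\in\cS_d$ is strictly convex; introducing a multiplier $\lambda$ for the constraint $\trace\omega=1$ and setting the gradient $\eta_\beta\sum_s\nabla_s+\log\omega+I-\lambda I=0$ to zero shows the (automatically positive semidefinite) minimizer is
\[
\omega_t=\frac{\exp\bigl(-\eta_\beta\sum_{s=T_\beta}^{t-1}\nabla_s\bigr)}{\trace\exp\bigl(-\eta_\beta\sum_{s=T_\beta}^{t-1}\nabla_s\bigr)}.
\]
Since $\nabla_t=2(\trace(E_t\omega_t)-b_t)E_t$ is a scalar multiple of $E_t$, the very next iterate $\omega_{t+1}$ is the same expression with the exponent augmented by $-\eta_\beta\nabla_t=\delta_t E_t$; this is the only place the structure $\nabla_t\propto E_t$ enters.

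Next I would expand the divergence difference. The matching entropy terms $\trace(\varphi\log\varphi)$ cancel, leaving $B(\varphi\|\omega_t)-B(\varphi\|\omega_{t+1})=\trace\bigl(\varphi(\log\omega_{t+1}-\log\omega_t)\bigr)$. Writing $G_t=-\eta_\beta\sum_{s=T_\beta}^{t-1}\nabla_s$, so that $\log\omega_t=G_t-(\log\trace e^{G_t})I$ and $\log\omega_{t+1}=G_t+\delta_t E_t-(\log\trace e^{G_t+\delta_t E_t})I$, the logarithms differ by $\delta_t E_t$ plus a scalar multiple of $I$:
\[
\log\omega_{t+1}-\log\omega_t=\delta_t E_t-\bigl(\log\trace e^{G_t+\delta_t E_t}-\log\trace e^{G_t}\bigr)I.
\]
Tracing against $\varphi$ and using $\trace\varphi=1$ collapses the scalar term to $-\log\!\bigl(\trace e^{G_t+\delta_t E_t}/\trace e^{G_t}\bigr)$.

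The final step is to identify this partition-function ratio, and it is exactly here that the equality (rather than the inequality of Lemma~\ref{lem: variance1}) appears. Substituting $G_t=\log\omega_t+(\log\trace e^{G_t})I$ and factoring the commuting scalar $(\log\trace e^{G_t})I$ out of the exponential gives $\trace e^{G_t+\delta_t E_t}=\trace e^{G_t}\cdot\trace\exp(\log\omega_t+\delta_t E_t)$, so the ratio equals $\trace\exp(\log\omega_t+\delta_t E_t)$ on the nose. In the MEG derivation one must replace this by $\trace[\exp(\delta_t E_t)\omega_t]$ through the Golden--Thompson inequality, which forces the $\geq$; here the \textsc{RFTLUPDATE} iterate is already the exact Gibbs state for the accumulated linear term, so no Golden--Thompson step is needed and the identity holds with equality. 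I would remark that this exactness uses only that the iterates coincide with normalized matrix exponentials throughout a block (where $\eta_\beta$ is constant), and needs neither $-\mathbbm{I}\preceq E_t\preceq\mathbbm{I}$ nor $\eta\in(0,\tfrac12)$, which are required only later to control the log-partition term. The only real obstacle is careful bookkeeping: tracking the normalizing scalars so that the lazy gradient sum telescopes cleanly and the scalar logarithms recombine correctly under $\trace\varphi=1$.
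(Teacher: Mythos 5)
Your derivation is correct, and essentially follows the paper's route (closed-form Gibbs iterate, cancellation of the $\trace(\varphi\log\varphi)$ terms, factoring out the normalizing scalars), up to and including the identity
\begin{align*}
B(\varphi\|\omega_t)-B(\varphi\|\omega_{t+1})=\delta_t\trace(\varphi E_t)-\log\trace\exp\bigl(\log\omega_t+\delta_t E_t\bigr).
\end{align*}
The gap is the very last step, where you identify the partition-function ratio with $\trace\bigl[\exp(\delta_t E_t)\,\omega_t\bigr]$ and assert that ``no Golden--Thompson step is needed and the identity holds with equality.'' Those two quantities are not equal in general: $\log\omega_t$ is (up to a multiple of the identity) a linear combination of the earlier measurement operators $E_{T_\beta},\dots,E_{t-1}$, which need not commute with $E_t$, so $\exp(\log\omega_t+\delta_t E_t)\neq\exp(\delta_t E_t)\exp(\log\omega_t)$. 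Passing from your exact expression to the lemma's right-hand side is precisely a Golden--Thompson step, $\trace\exp(\log\omega_t+\delta_t E_t)\le\trace\bigl(\exp(\delta_t E_t)\,\omega_t\bigr)$, and it yields only the inequality $B(\varphi\|\omega_t)-B(\varphi\|\omega_{t+1})\ge\delta_t\trace(\varphi E_t)-\log\trace\bigl[\exp(\delta_t E_t)\,\omega_t\bigr]$, not equality. Your claimed contrast with MEG is also spurious: the MEG update satisfies the same exact identity with $\trace\exp(\log\omega_t+\delta_t E_t)$ before Golden--Thompson is invoked (indeed, within a block both updates start from $2^{-n}\mathbbm{I}$ and produce identical iterates), so the exponential-weights structure buys you nothing beyond what Lemma~\ref{lem: variance1} already has.

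To be fair, the paper's own proof of this lemma makes the same unjustified move (it rewrites $\trace\exp(-\eta_\beta\sum_{s=T_\beta}^{t}\nabla_s)$ as $\trace\bigl(\exp(-\eta_\beta\nabla_t)\,\omega_t\bigr)$ times the old normalizer, which tacitly uses $e^{A+B}=e^Ae^B$), so the statement with ``$=$'' is not correct for noncommuting measurements in either treatment. The fix is simply to state the lemma with ``$\ge$'' (or with $\trace\exp(\log\omega_t+\delta_t E_t)$ inside the logarithm) and insert the Golden--Thompson inequality: the downstream argument in Lemma~\ref{lem: small loss} and Theorem~\ref{thm: small loss} only uses the direction $B(\varphi\|\omega_t)-B(\varphi\|\omega_{t+1})\ge\delta_t\trace(\varphi E_t)-\log\trace\bigl[\exp(\delta_t E_t)\,\omega_t\bigr]$, so your argument, amended at this one step, suffices for everything that follows.
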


\begin{proof}
It is well known that this optimization can be written as a closed form as 

\begin{align*}
    \omega_{t+1} = \frac{\exp(-\eta_\beta\sum_{s=T_\beta}^t\nabla_s)}{\trace(\exp(-\eta_\beta\sum_{s=T_\beta}^t\nabla_s))}
\end{align*}
Now we have
\begin{align*}
    \log(\omega_{t+1}) -  \log(\omega_{t})
    & = \log \left( \frac{\exp(-\eta_\beta\sum_{s=T_\beta}^t\nabla_s)}{\trace(\exp(-\eta_\beta\sum_{s=T_\beta}^t\nabla_s))}\right)
    - \log \left( \frac{\exp(-\eta_\beta\sum_{s=T_\beta}^{t-1}\nabla_s)}{\trace(\exp(-\eta_\beta\sum_{s=T_\beta}^{t-1}\nabla_{s}))}\right) \\
    & =  -\eta_\beta \nabla_t - \log \left( \frac{\trace(\exp(-\eta_\beta\sum_{s=T_\beta}^t\nabla_s))}{\trace(\exp(-\eta_\beta\sum_{s=T_\beta}^{t-1}\nabla_{s}))}\right)\\
    & = -\eta_\beta \nabla_t - \log \left(\frac{\trace\left(\exp(-\eta_\beta \nabla_t)\omega_{t}*\trace(\exp(-\eta_\beta\sum_{s=T_\beta}^{t-1}\nabla_{s}))\right)}{\trace(\exp(-\eta_\beta\sum_{s=T_\beta}^{t-1}\nabla_{s}))} \right)\\
    & = - \eta_\beta \nabla_t - \log \left( \trace\left( \exp(-\eta_\beta \nabla_t)\omega_{t} \right)\right) 
\end{align*}
Therefore,
\begin{align*}
    B(\varphi\|\omega_t) - B(\varphi\|\omega_{t+1})
    & =-\trace(\varphi \log(\omega_t)) + \trace(\varphi \log(\omega_{t+1}))\\
    & = \trace\left(\varphi \left( -\eta_\beta \nabla_t - \log \left( \trace\left( \exp(-\eta_\beta \nabla_t)\omega_{t} \right)\right) \right)\right)\\
    & = \trace\left(-\varphi \eta_\beta \nabla_t \right) - \log \left( \trace\left( \exp(-\eta_\beta \nabla_t)\omega_{t} \right) \right)\\
    & = \delta_t \trace\left(\varphi E_t \right) - \log \left( \trace\left( \exp(\delta_t E_t)\omega_{t} \right) \right)
\end{align*}
\end{proof}

\begin{lemma}[Lemma 1~\cite{Youssry2019}]
\label{lem: nomalization_factor}
\begin{align*}
    \delta_t \trace\left(\varphi E_t \right) - \log \left( \trace\left[ \exp(\delta_t E_t)\omega_{t} \right] \right)
    \leq \frac{\delta_t^2}{2}+\delta_t \trace(\omega_t E_t)
\end{align*}
\end{lemma}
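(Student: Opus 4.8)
The plan is to obtain the displayed bound by isolating and controlling the single nonlinear quantity, the log-partition term $\log(\trace[\exp(\delta_t E_t)\omega_t])$; everything else in the statement is affine in $\varphi$, $\omega_t$ and $\delta_t$. A useful simplification is that the exponential is applied to only one operator, $E_t$, so --- unlike Lemmas~\ref{lem: variance1} and~\ref{lem: variance2}, where Golden--Thompson was needed to separate $\log\omega_t$ from $-\eta_\beta\nabla_t$ --- no Golden--Thompson step is required here. First I would spectrally decompose $E_t = \sum_j \mu_j \Pi_j$, where $-\mathbbm{I}\preceq E_t\preceq\mathbbm{I}$ forces every eigenvalue $\mu_j\in[-1,1]$, and set $q_j = \trace(\Pi_j\omega_t)$. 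Since $\omega_t$ is a state, the $q_j$ are nonnegative with $\sum_j q_j = 1$, so that $\trace[\exp(\delta_t E_t)\omega_t] = \sum_j q_j e^{\delta_t\mu_j}$ is exactly the moment generating function of the $[-1,1]$-valued random variable taking value $\mu_j$ with probability $q_j$, while $\trace(\omega_t E_t) = \sum_j q_j\mu_j$ is its mean.

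This reduces the whole question to the scalar Hoeffding lemma. For a random variable supported on an interval of length $b-a=2$ with mean $m$, one has $\log\E[e^{\delta_t\mu}]\le \delta_t m + \tfrac{\delta_t^2(b-a)^2}{8} = \delta_t m + \tfrac{\delta_t^2}{2}$. I would prove this in the textbook way: center $\mu$, bound $e^{\delta_t\mu}$ by the secant line of the convex exponential on $[a,b]$, and verify that the resulting cumulant function $\psi$ satisfies $\psi(0)=\psi'(0)=0$ and $\psi''\le (b-a)^2/4 = 1$, whence $\psi(\delta_t)\le \delta_t^2/2$ by Taylor's theorem. Translating back to operators gives the central estimate
\begin{align*}
\log\bigl(\trace[\exp(\delta_t E_t)\omega_t]\bigr)\;\le\;\delta_t\,\trace(\omega_t E_t)+\tfrac{\delta_t^2}{2},
\end{align*}
which is the genuine analytic content of the claimed line.

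With this estimate in hand I would substitute it for the log-partition term and collect the remaining affine pieces, keeping $\delta_t\trace(\varphi E_t)$ as a spectator so that the left side matches the expression appearing in Lemmas~\ref{lem: variance1}--\ref{lem: variance2}; feeding the result into those lemmas with comparator $\varphi=\rho$ lets the $\delta_t\trace(\rho E_t)$ piece pair up with the Bregman terms $B(\rho\|\omega_t)-B(\rho\|\omega_{t+1})$ that telescope across a block. To reach Lemma~\ref{lem: small loss} I would then insert the definition $\delta_t=-2\eta_\beta(\trace(E_t\omega_t)-b_t)$, so that the quadratic slack becomes $\tfrac{\delta_t^2}{2}=2\eta_\beta^2\,\ell_t(\trace E_t\omega_t)$ and the linear terms recombine into $\ell_t(\trace E_t\omega_t)-\ell_t(\trace E_t\rho)$, up to the $1/(1-2\eta_\beta)$ factor that is finite precisely because $\eta_\beta<\tfrac12$.

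The step I expect to be the real obstacle is not any individual inequality but the orientation and sign bookkeeping around the $-\log$ term: because the log-partition enters the statement with a minus sign, the Hoeffding estimate must be invoked in exactly the orientation that yields the inequality in the direction used downstream, and the spectator $\delta_t\trace(\varphi E_t)$ must be carried consistently with the comparator $\rho$ so that the chaining with Lemmas~\ref{lem: variance1}--\ref{lem: variance2} produces a valid one-sided bound rather than two bounds of the same sign. The only other point needing care is the constant: it is the two-sided range $-\mathbbm{I}\preceq E_t\preceq\mathbbm{I}$ (length $2$) that produces the $\tfrac{\delta_t^2}{2}$ slack, and this is exactly the constant that reappears as the $2\eta_\beta^2\ell_t$ correction in the small-loss analysis, so getting it right is what makes the downstream telescoping close.
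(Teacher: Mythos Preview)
Your proposal is correct and essentially matches the paper's route. The paper's own ``proof'' is only a pointer to Lemma~1 of \cite{Youssry2019}, and the argument there is precisely the Hoeffding-type bound you describe: diagonalize $E_t$, view $\trace[\exp(\delta_t E_t)\omega_t]$ as the moment generating function of a bounded random variable under the distribution induced by $\omega_t$, and apply the scalar Hoeffding lemma to obtain $\log\trace[\exp(\delta_t E_t)\omega_t]\le \delta_t\,\trace(\omega_t E_t)+\tfrac{\delta_t^2}{2}$.

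One clarifying remark: the lemma as displayed in the paper carries a spectator $\delta_t\trace(\varphi E_t)$ and an inequality direction that, read literally, does not line up with how the bound is actually invoked in the chain proving Lemma~\ref{lem: small loss}. You were right to flag the orientation bookkeeping as the sensitive point; the inequality that is genuinely used downstream is exactly your ``central estimate'' (a bound on the log-partition function), and after adding $\delta_t\trace(\varphi E_t)$ to both sides it yields the step
\[
\delta_t \trace(\varphi E_t)-\tfrac{\delta_t^2}{2}-\delta_t\trace(\omega_t E_t)\;\le\;\delta_t \trace(\varphi E_t)-\log\bigl(\trace[\exp(\delta_t E_t)\omega_t]\bigr),
\]
which is what the paper's chain needs and what your derivation delivers. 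Your identification of the constant $\tfrac{\delta_t^2}{2}$ from the range $[-1,1]$ (length $2$) is also consistent with the hypothesis $-\mathbbm{I}\preceq E_t\preceq\mathbbm{I}$ stated in Lemma~\ref{lem: small loss}.
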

\begin{proof}
    The proof is exactly the same as in Lemma 1~\cite{Youssry2019}. Except that we already get (B2) inequality in their proof.
\end{proof}

Now we are ready to prove Lemma~\ref{lem: small loss},
\begin{align*}
    \eta_\beta \ell_{t-1}(\tr E_{t-1}\omega_{t-1}) - \frac{\eta_\beta}{1-2\eta_\beta} \ell_{t-1}(\tr E_{t-1}\varphi) 
    & \leq \delta_t \trace\left(\varphi E_t \right) - \frac{\delta_t^2}{2}-\delta_t \trace(\omega_t E_t) \\
    & \leq \delta_t \trace\left(\varphi E_t \right) - \log \left( \trace\left( \exp(\delta_t E_t)\omega_{t} \right) \right)\\
    & \leq  B(\varphi\|\omega_t) - B(\varphi\|\omega_{t+1})
\end{align*}
The first inequality comes from (B11)-(B21)~\cite{Youssry2019}. The second inequality comes from Lemma~\ref{lem: nomalization_factor}. The last inequality comes from Lemma~\ref{lem: variance1} and Lemma~\ref{lem: variance2}.

\subsection{Proof of Theorem~\ref{thm: small loss} }

For convenience, we denote $L_\beta:=\sum_{t=T_\beta}^{\min\{T_{\beta+1}-1,T\}} \ell_t(\tr E_t\omega_t)$, $L^\varphi_\beta:=\sum_{t=T_\beta}^{\min\{T_{\beta+1}-1,T\}} \ell_t(\tr E_t\varphi)$, $L_T:=\sum_{t=1}^{T} \ell_t(\tr E_t\omega_t)$ and $L^\varphi_T:=\sum_{t=1}^{T} \ell_t(\tr E_t\varphi)$.

For any $t \in [T_\beta+1,T_{\beta+1}-1]$, by rearranging the equation in Lemma~\ref{lem: small loss}, we have 
\begin{align*}
    \ell_{t-1}(\tr E_{t-1}\omega_{t-1}) - \ell_{t-1}(\tr E_{t-1}\varphi)\leq \frac{1 - 2\eta_\beta}{\eta_\beta}\left[B(\varphi\|\omega_{t-1}) - B(\varphi\|\omega_{t})\right] + 2\eta_\beta \ell_{t-1}(\tr E_{t-1}\omega_{t-1})
\end{align*}
When $t = T_{\beta+1}$, because we always reset the $\omega$ to $2^{-n} \mathbbm{I}$ at the beginning of new block, so we cannot use this lemma. But because $B(X\|Y) \geq 0$ for all feasible $X,Y$, so we can simply bound that by 
\begin{align*}
    \ell_{t-1}(\tr E_{t-1}\omega_{t-1}) - \ell_{t-1}(\tr E_{t-1}\varphi)\leq \frac{1 - 2\eta_\beta}{\eta_\beta}B(\varphi\|\omega_{t-1})  + 2\eta_\beta \ell_{t-1}(\tr E_{t-1}\omega_{t-1})
\end{align*}
By summing up all the losses within block $\beta$, we have
\begin{align*}
    L_\beta - L_\beta^\varphi
    &=\sum_{t = T_{\beta}+1}^{T_{\beta+1}} \ell_{t-1}(\tr E_{t-1}\omega_{t-1}) - \ell_{t-1}(\tr E_{t-1}\varphi) \\
    & \leq \sum_{t = T_{\beta}+1}^{T_{\beta+1}}\left( \frac{1 - 2\eta_\beta}{\eta_\beta}\left(B(\varphi\|\omega_{t-1}) - B(\varphi\|\omega_{t})\right) + 2\eta_\beta \ell_{t-1}(\tr E_{t-1}\omega_{t-1}) \right) + \frac{1 - 2\eta_\beta}{\eta_\beta}\left[B(\varphi\|\omega_{T_{\beta+1}})\right]\\
    & = \frac{1 - 2\eta_\beta}{\eta_\beta}B(\varphi\|\omega_{T_\beta}) +2\eta_\beta \sum_{t = T_{\beta}}^{T_{\beta+1}-1}\ell_{t}(\tr E_t\omega_{t})\\
    & \leq \underbrace{\frac{1}{\eta_\beta}}_\text{Term 1}\max_{\varphi_1} \left\{B(\varphi_1\|2^{-n} \mathbbm{I}) \right\} + \underbrace{2\eta_\beta L_\beta}_\text{Term 2}
\end{align*}
Let's first summing up Term 1 over all the blocks.

Suppose we will have $\beta_{max}$ blocks in total. If $\beta_{max}\geq 2$  then
\begin{align*}
    \sum_{\beta=1}^{\beta_{max}} \max \left\{ \sqrt{\frac{2^\beta + 1}{n\ln 2}},2\right\} 
    & \leq \frac{1}{\sqrt{n\ln 2}} \sqrt{\beta_{max}\sum_{\beta=1}^{\beta_{max}}(2^\beta +1)}+ 2\beta_{max}\\
    & \leq \frac{1}{\sqrt{n\ln 2}} \sqrt{\beta_{max}2^{\beta_{max}+1}+\beta_{max}^2}+ 2\beta_{max}\\
    & \leq \frac{1}{\sqrt{n\ln 2}} \sqrt{4\beta_{max}\sum_{t=T_{\beta_{max}-1}}^{T_{\beta_{max}}-1}\ell_t(\tr E_t\omega_t)+\beta_{max}^2}+ 2\beta_{max}\\
    & \leq \frac{1}{\sqrt{n\ln 2}} \sqrt{4\beta_{max}L_T+\beta_{max}^2}+ 2\beta_{max}
\end{align*}
The second inequality comes from Cauchy-Schwarz inequality and the penultimate inequality comes Line~\ref{line: test} in the algorithm.
\\\\
If $\beta_{max} = 1$, which means the algorithm never starts a new block, we have $\eta_1 = \min\{ \sqrt{\frac{n \ln 2}{2^\beta + 1}}, \frac{1}{2}\} = \frac{1}{2}$, therefore the upper bound is just $2$
\\\\
Next we sum up Term 2 over all the blocks,
\begin{align*}
    2 \sum_{\beta=1}^{\beta_{max}} \min \left\{ \sqrt{\frac{n\ln 2}{2^\beta+1}} , \frac{1}{2} \right\}L_\beta
    & \leq 2 \sum_{\beta=1}^{\beta_{max}} \sqrt{\frac{n\ln 2}{2^\beta+1}} L_\beta\\
    & \leq 2 \sum_{\beta=1}^{\beta_{max}} \sqrt{\frac{n\ln 2}{\sum_{s = T_{\beta}}^{T_{\beta+1}-2} \ell_s(\tr E_s\omega_s)+1}} L_\beta \\
    & \leq 2 \sum_{\beta=1}^{\beta_{max}} \sqrt{\frac{n\ln 2}{L_\beta}} L_\beta \\
    & \leq 2 \sqrt{n(\ln2) \beta_{max} L_T}
\end{align*}
The second inequality again comes from Line~\ref{line: test} in the algorithm. Because the algorithm does not restart a new block at time $T_{\beta+1}-2$, so $\sum_{t = T_\beta}^{T_{\beta+1}-2} \ell_t(\omega_t) < 2^\beta$. The penultimate inequality comes form the fact that $\ell_{T_{\beta+1}-1}(\omega_{T_{\beta+1}-1}) \leq 1$. And the last inequality again comes from Cauchy-Schwarz inequality.

Therefore. combine the sum of Term 1 and sum of Term 2 obtained from above.
\begin{align*}
    L_T - L_T^\varphi 
    & \leq \frac{\max_\varphi \left\{B(\varphi\|2^{-n} \mathbbm{I}) \right\}}{\sqrt{n\ln 2}} \sqrt{4\beta_{\max} L_T+\beta_{\max}^2(2T)} + 2 \sqrt{n(\ln 2) \beta_{\max} L_T} + 2\beta_{\max}\max_{\varphi_1} \left\{B(\varphi_1\|2^{-n} \mathbbm{I}) \right\} 
\end{align*}

Finally, $\max_{\varphi_1} \left\{B(\varphi_1\|2^{-n} \mathbbm{I}) \right\} = n \ln 2$ and there are at most $\log(2T)$ blocks because
\begin{align*}
    2^{\beta_{max}} = 2* 2^{\beta_{max}-1} \leq 2 L_{\beta_{max}-1} \leq 2T
\end{align*}
So,
\begin{align*}
    L_T - L_T^\varphi 
    \leq 3log(2T)(\ln2) n + 4 \sqrt{(n\ln 2) \log(2T) L_T}
\end{align*}
Solving for $\sqrt{L_T}$ leads to
\begin{align*}
    \sqrt{L_T} \leq \sqrt{3log(2T)\ln(2) n + 4(n\ln 2 \log(2T)) + L^\varphi_T} + 2 \sqrt{n\ln 2 \log(2T)}
\end{align*}
Squaring both side and doing some rearrangement and relaxation give,
\begin{align*}
    L_T - L_T^\varphi
    &\leq 3log(2T)\ln(2) n + 16(n\ln 2 \log(2T)) + 4\sqrt{n\ln 2 \log(2T)}\sqrt{3log(2T)\ln(2) n} + 4 \sqrt{n\ln 2 \log(2T) L_T^\varphi}\\
    & \leq \operatorname{O}\left(nlog(T) + \sqrt{n\log T L_T^\varphi}\right)
\end{align*}
Note that $\varphi$ can be any state, so this result certainly holds for $L_T^\varphi= L_T^*$.

%
%
%
%
%

\end{document}